\documentclass[sigconf,nonacm]{acmart}

\usepackage{amsmath,amsthm}
\usepackage{multirow,colortbl}
\usepackage{float}
\floatstyle{ruled}
\newfloat{algorithm}{tbp}{loa}
\floatname{algorithm}{Algorithm}
\floatstyle{plain}

\DeclareMathOperator{\KL}{KL}
\DeclareMathOperator{\CE}{CE}
\DeclareMathOperator{\sg}{sg}
\DeclareMathOperator{\TV}{TV}

\DeclareMathOperator{\uniq}{uniq}
\DeclareMathOperator{\norm}{norm}
\newcommand{\ind}[1]{\mathbf{1}\{#1\}}

\definecolor{OCBlue}{HTML}{356A9A}
\definecolor{OCTeal}{HTML}{187F77}
\definecolor{OCOrange}{HTML}{B45C22}
\definecolor{OCPurple}{HTML}{7C4B94}
\definecolor{OCTableBestBaseline}{HTML}{FFF0D6}
\definecolor{OCTableOurs}{HTML}{EAF2F8}
\definecolor{OCTableHeader}{HTML}{F2F5F8}
\definecolor{OCTableCategoryText}{HTML}{52616E}
\newsavebox{\ocmaintablemeasure}
\newcommand{\OCMainTableFit}[3]{%
  \begingroup
  \setlength{\tabcolsep}{0pt}%
  \sbox{\ocmaintablemeasure}{\begin{tabular}{#2}#3\end{tabular}}%
  \setlength{\tabcolsep}{\dimexpr
    (\linewidth-\wd\ocmaintablemeasure)/\numexpr 2*#1\relax\relax}%
  \begin{tabular}{#2}#3\end{tabular}%
  \endgroup
}
\newcommand{\OCMeanSD}[2]{\shortstack[c]{\textbf{#1}\\[-2pt]{\scriptsize$\pm$\,#2}}}
\newlength{\octermwidth}
\newlength{\oclabelwidth}
\newcommand{\octerm}[4][\displaystyle]{%
  \begingroup\setlength{\fboxsep}{1.2pt}%
  \settowidth{\octermwidth}{\colorbox{#2!10}{$#1 #3$}}%
  \settowidth{\oclabelwidth}{$\scriptstyle\text{\normalfont\bfseries #4}$}%
  \ifdim\oclabelwidth>\octermwidth
    \setlength{\octermwidth}{\oclabelwidth}%
  \fi
  \underbrace{\makebox[\octermwidth][c]{%
    \colorbox{#2!10}{$#1 #3$}}}%
  _{\textcolor{#2}{\text{\normalfont\bfseries #4}}}%
  \endgroup
}

\setcopyright{none}
\begin{document}

\title[OneChord]{OneChord: Distill What You Rank, Recommend in One Pass}
\author{Zesheng Li}
\affiliation{%
  \institution{University of Chinese Academy of Sciences}
  \city{Beijing}
  \country{China}}
\email{lizesheng26@mails.ucas.ac.cn}
\author{Chengchang Pan}
\affiliation{%
  \institution{University of Chinese Academy of Sciences}
  \city{Beijing}
  \country{China}}
\email{166353314@qq.com}
\author{Honggang Qi}
\authornote{Corresponding author.}
\affiliation{%
  \institution{University of Chinese Academy of Sciences}
  \city{Beijing}
  \country{China}}
\email{hgqi@ucas.ac.cn}
\renewcommand{\shortauthors}{Li et al.}

\begin{abstract}
Accurate product discovery on the Web requires recommendation models that rank catalog candidates within tight inference budgets. In semantic-ID recommendation, deployment can fuse code probabilities with item similarities, making the final ranking depend on their normalization and relative weights. This motivates a direct training principle: \emph{distill what you rank}. We present OneChord, which uses one learned, catalog-normalized fusion for next-item supervision, refinement distillation, and inference. A shared parallel decoder produces joint SID evidence and item-matching scores. During training, a frozen denoiser supplies legal partial SIDs; the student evaluates these states to construct refined fused targets. Matching direct and refined predictions on shared hard candidates transfers refinement into one-pass recommendation. We characterize the fusion weights induced by candidate conditioning and establish sufficient conditions for preserving the refined predictor's full-catalog Top-$K$ selection. Across six Amazon datasets, OneChord achieves the highest Recall@5 and NDCG@5 among the compared methods on every dataset, with average relative gains of 6.30\% and 7.12\% over the strongest baseline. It also achieves a 1.04--1.18$\times$ full-test generation speedup over the fastest baseline. OneChord makes the served ranking distribution the common object of recommendation learning and distillation.
\end{abstract}

\maketitle
\hypersetup{pdfauthor={Zesheng Li, Chengchang Pan, Honggang Qi}}

\section{Introduction}
\label{sec:introduction}

Product discovery on the Web depends on accurate recommendation under strict response-time budgets. Semantic identifiers (SIDs) represent catalog items as short code sequences~\cite{rajput2023tiger,wang2024letter}, and parallel decoding predicts their positions together~\cite{hou2025rpg,wang2026onegr,cai2026sprint}. Deployment can combine SID probabilities with item-matching evidence~\cite{tran2026epic,cai2026sprint}. The resulting list depends on the \emph{fused catalog distribution}: its support, normalization, and branch weights jointly determine which candidates reach the user. This makes the learning target a Web recommendation problem in its own right: \textbf{how can training-time refinement improve the ranking served by a one-pass predictor?}

\begin{figure}[t]
\centering
\includegraphics[width=\linewidth]{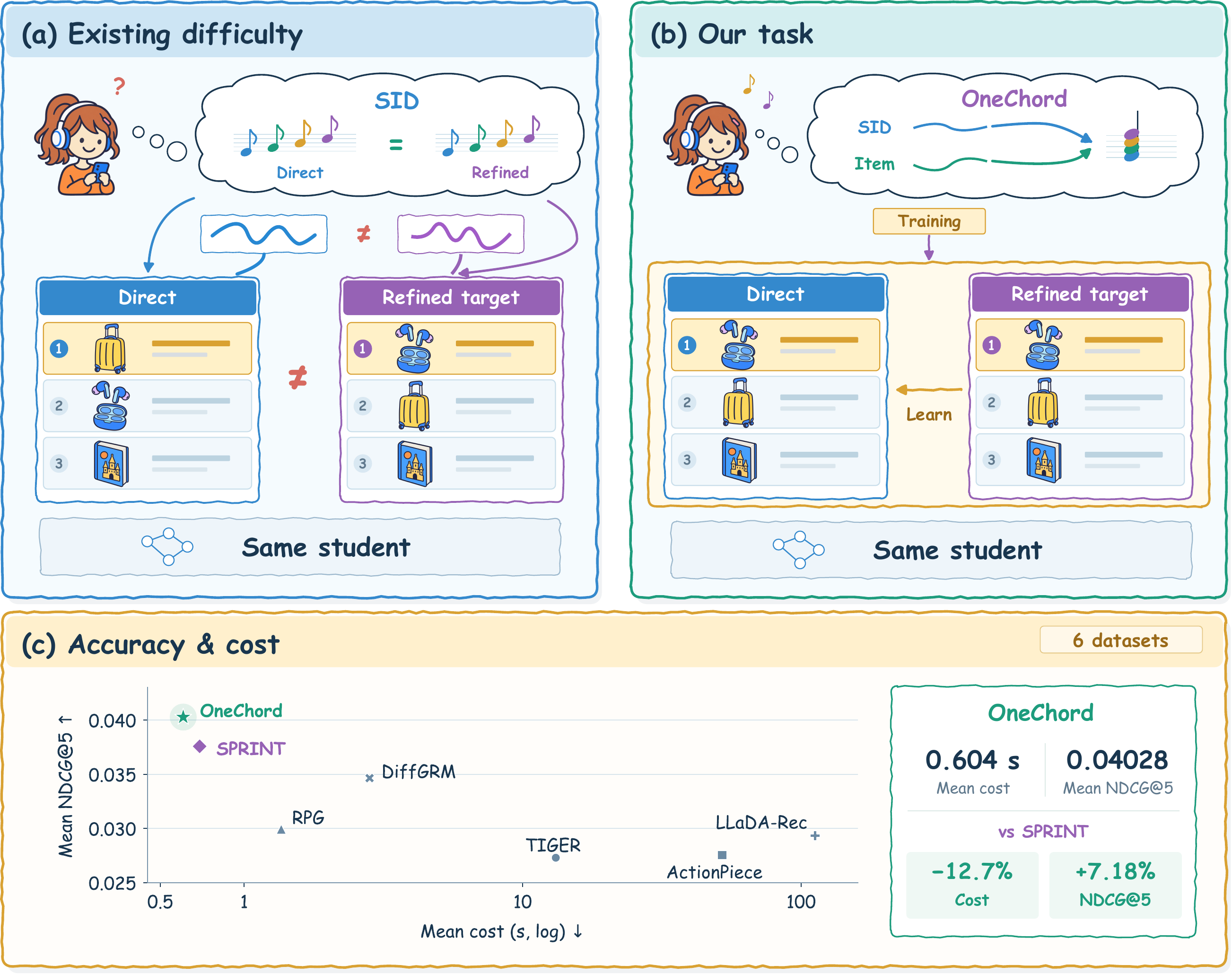}
\caption{\textbf{Distill what you rank, recommend in one pass.} (a) SID agreement can hide different item rankings. (b) OneChord aligns direct and refined fused rankings. (c) Mean NDCG@5 versus mean generation cost over six domains for all methods reported in both Tables~\ref{tab:main-results} and~\ref{tab:inference-cost}. The cost axis is logarithmic.}
\Description{Three panels show the motivation, approach and accuracy--cost comparison. Panel a shows identical SID predictions with different item rankings. Panel b aligns the direct fused ranking with a refined target from the same student. Panel c has one named point per method: TIGER, ActionPiece, RPG, LLaDA-Rec, DiffGRM, SPRINT and OneChord. The horizontal coordinate is the arithmetic mean cost in seconds over six domains, shown on a logarithmic axis. The vertical coordinate is the arithmetic mean NDCG at 5 over the same domains. Higher and further left is better.}
\label{fig:task}
\end{figure}

We introduce \textbf{OneChord}\footnote{Code: \url{https://anonymous.4open.science/r/OneChord-D1A2/}.}, built on \textbf{distill what you rank}. Next-item likelihood and ranking distillation optimize the same learned fusion used at inference. A shared encoder and parallel decoder produce joint SID probabilities and item similarities, with a discrete mixture coupling SID positions~\cite{hayakawa2025di4c}. A frozen denoiser supplies legal states that the student evaluates to construct refined fused targets. Catalog normalization precedes fusion and conditioning on shared hard candidates, preserving branch balance as refined rankings transfer into direct prediction (Figure~\ref{fig:task}).

Our analysis characterizes candidate-conditioned fusion and gives a margin condition for full-catalog Top-$K$ agreement. Across six Amazon domains, OneChord improves NDCG@5 over the best prior method on every domain, averaging \textbf{7.12\%} relative gain and reaching \textbf{9.96\%} on Beauty. It also generates recommendations \textbf{1.04--1.18$\times$} faster than the fastest baseline. Our contributions are:
\begin{list}{\textbullet}{%
  \setlength{\leftmargin}{0.8em}%
  \setlength{\labelwidth}{0.45em}%
  \setlength{\labelsep}{0.35em}%
  \setlength{\itemindent}{0pt}%
  \setlength{\listparindent}{0pt}%
  \setlength{\topsep}{2pt}%
  \setlength{\partopsep}{0pt}%
  \setlength{\itemsep}{0pt}%
  \setlength{\parsep}{0pt}%
}
\item \textbf{C1) Learn the served distribution.} A learned fusion unifies next-item supervision, refinement distillation, and inference.
\item \textbf{C2) Preserve the refined ranking.} Catalog-consistent targets and hard candidates connect distillation to Top-$K$ agreement.
\item \textbf{C3) Improve accuracy and efficiency.} OneChord gains 7.12\% in average NDCG@5 over the best prior method with faster one-pass generation.
\end{list}

\section{Related Work}
\label{sec:related_work}

\paragraph{\textbf{Generative recommendation and semantic indexing.}}
Attention over interaction histories models sequential preferences~\cite{kang2018sasrec,sun2019bert4rec}, while content and collaborative behavior provide complementary evidence for item representation and ranking~\cite{hou2022unisrec,hou2023vqrec,li2026bridge}. Generative approaches formulate recommendation as text or identifier prediction, as explored in P5 and TIGER~\cite{geng2022p5,rajput2023tiger}. Subsequent work integrates collaborative semantics, balances code usage, incorporates sequence context, or coordinates index and recommender training~\cite{zheng2024lcrec,wang2024letter,hou2025actionpiece,liu2025etegrec}. These directions enrich the information carried by semantic identifiers. OneChord addresses how SID and item evidence jointly determine recommendation: it learns and distills their \textbf{catalog-normalized fusion} over a shared semantic catalog.

\paragraph{\textbf{Parallel generation and catalog constraints.}}
Parallel decoders predict identifier positions together, with long SIDs and structured catalog search supporting a single neural pass~\cite{hou2025rpg,wang2026onegr}. Categorical corruption and denoising~\cite{austin2021d3pm,sahoo2024mdlm} support bidirectional SID modeling, adaptive masking, and refinement~\cite{shi2025lladarec,liu2026diffgrm,mu2026mdgr}. Item-aware prediction connects code evidence with catalog candidates~\cite{tran2026epic}. SPRINT~\cite{cai2026sprint} produces SID probabilities and item similarities in one forward pass, normalizes both over legal SIDs, and adds them for ranking. OneChord learns the fusion weight and directly supervises and distills \textbf{the final fused distribution}, retaining one-pass inference.

\paragraph{\textbf{Consistency and ranking distillation.}}
Recommendation distillation uses preferred items, relative orders, and latent representations to supervise compact predictors~\cite{tang2018rd,kang2020derrd}; lightweight decoder heads also reduce SID prediction cost~\cite{guo2026sidmlp}. Consistency models transfer iterative refinement into fewer evaluations~\cite{song2023consistency}. Di4C~\cite{hayakawa2025di4c} develops discrete mixture representations and boundary and composition objectives for dimensional dependence. OneChord uses these tools within a recommender whose primary transfer target is \textbf{the served fusion}: next-item likelihood calibrates its gate, and teacher-assisted states provide refined fused rankings on shared hard candidates. Catalog normalization precedes fusion and candidate restriction, preserving the deployed scoring rule during distillation.

\section{Method}
\label{sec:method}

OneChord learns \textbf{the distribution it serves}. Its design connects three operations: supervise a learned catalog fusion, construct refined predictions of that same fusion, and transfer their ranking to a direct predictor. Figure~\ref{fig:onechord} shows the shared predictor and its training objectives.

\begin{figure*}[t]
\centering
\includegraphics[width=0.8\textwidth]{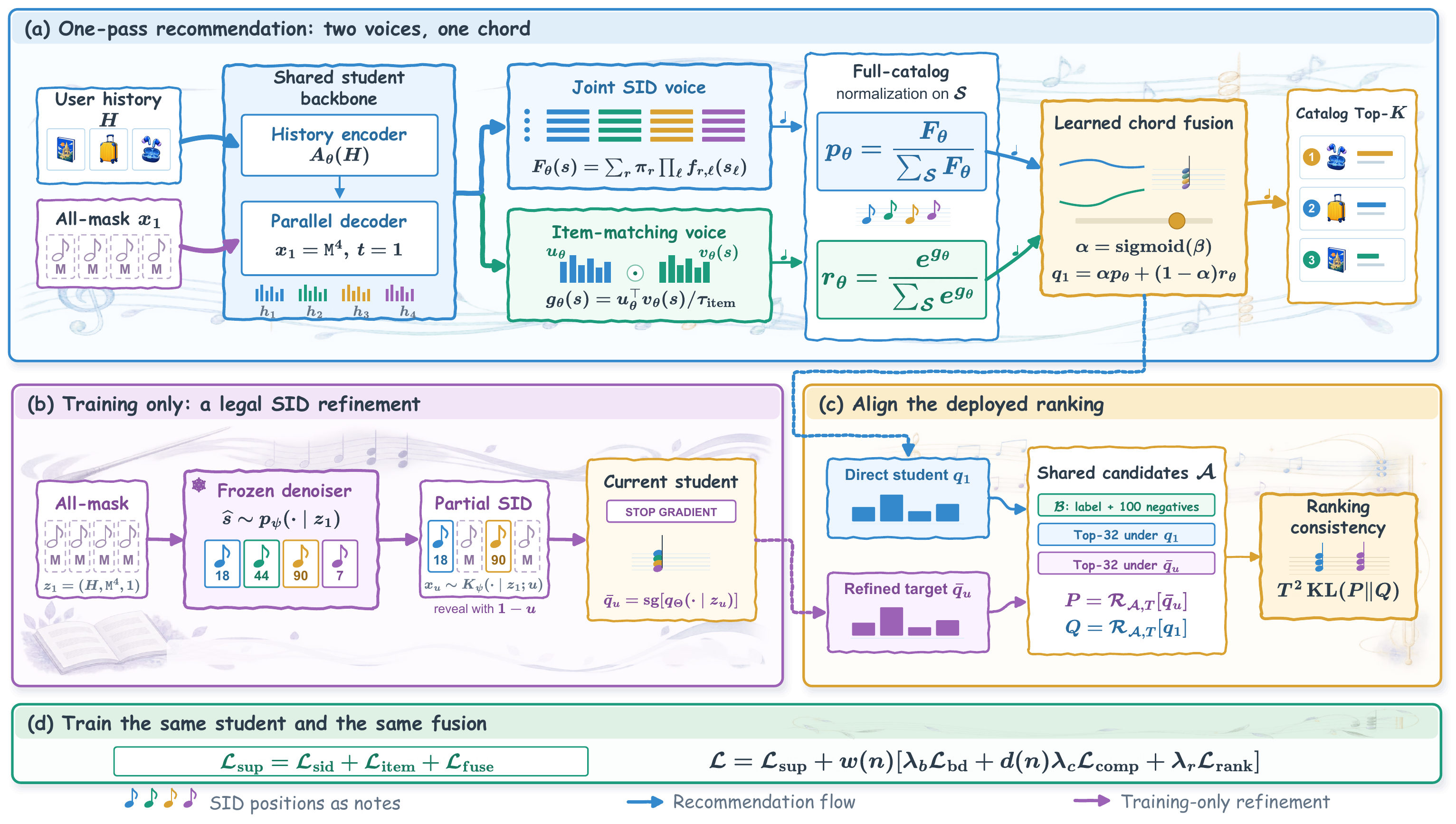}
\caption{\textbf{OneChord: distill what you rank, recommend in one pass.} (a) A shared history encoder and parallel decoder produce joint SID and item-matching evidence, normalized over the full catalog and combined through a learned fusion weight. (b) A frozen denoiser supplies a legal SID refinement, which the current student evaluates with stopped gradients. (c) Ranking consistency matches the direct and refined fused distributions on shared hard candidates. (d) Next-item supervision and scheduled distillation jointly train the same student and fusion.}
\Description{Four panels show OneChord. Panel a encodes the history, decodes all SID positions in one pass, normalizes joint SID and item-matching evidence on the full catalog, learns their fusion weight, and returns the top K. Panel b uses a frozen denoiser to refine an all-mask SID and evaluates the current student at the partial SID without target gradients. Panel c aligns direct and refined rankings on the union of supervised candidates and both distributions' top-scoring items. Panel d combines next-item supervision with scheduled boundary, composition, and ranking distillation.}
\label{fig:onechord}
\end{figure*}

\subsection{Semantic Catalog and Prediction States}
\label{sec:method-setup}

For an interaction history $H=(i_1,\ldots,i_m)$ and next item $i^+$, a fixed index assigns an $L$-code SID:
\begin{equation}
\begin{aligned}
\sigma(i)&=(s_1,\ldots,s_L)\in\textstyle\prod_{\ell=1}^{L}\mathcal V_\ell,\\
\mathcal S&=\{\sigma(i):i\in\mathcal I\},\qquad s^+=\sigma(i^+).
\end{aligned}
\label{eq:catalog-setup}
\end{equation}
The distinct legal SIDs $\mathcal S$ define the shared output space. The index is fixed before recommender training. At noise level $t\in[0,1]$, independent masking gives
\begin{equation}
Q_t(x\mid s)=\prod_{\ell=1}^{L}
\big[\octerm{OCTeal}{(1-t)\ind{x_\ell=s_\ell}}{Keep code}
+\octerm{OCPurple}{t\ind{x_\ell=\mathtt M}}{Mask code}\big].
\label{eq:masking}
\end{equation}
Write $z_t=(H,x_t,t)$. Partially visible states support refinement; the fully masked state $z_1=(H,\mathtt M^L,1)$ supports direct recommendation. Both use the same predictor.

\subsection{Shared One-Pass Catalog Predictor}
\label{sec:method-scoring}

\paragraph{\textbf{Shared decoding.}}
Code embeddings $E_\ell$ and a shared two-layer projection $M_\theta$ represent a history item as $M_\theta([E_1(s_1);\ldots;E_L(s_L)])$ plus a temporal position embedding. Bidirectional Transformer blocks~\cite{liu2026diffgrm,cai2026sprint} compute
\begin{equation}
\begin{aligned}
A_\theta(H)&=\operatorname{Enc}_\theta(H),\\
(h_1,\ldots,h_L)&=\operatorname{Dec}_\theta(A_\theta(H),x_t,t).
\end{aligned}
\label{eq:backbone}
\end{equation}
Decoder self-attention connects SID positions, and cross-attention reads history memory. All positions share one decoder evaluation and supply both scoring branches.

\paragraph{\textbf{Joint SID evidence.}}
A mixture of categorical products~\cite{hayakawa2025di4c} couples positions through a shared latent component:
\begin{equation}
F_\theta(s\mid z_t)=\sum_{r=1}^{R}
\octerm{OCBlue}{\pi_{\theta,r}(z_t)}{Component weight}
\octerm{OCTeal}{\prod_{\ell=1}^{L}f_{\theta,r,\ell}(s_\ell\mid z_t)}{Joint SID component}.
\label{eq:joint-mixture}
\end{equation}
Learned vectors $c_r$ specialize a shared decoder through
\begin{equation}
\begin{aligned}
a_{r,\ell}&=h_\ell+t c_r,\\
\widetilde h_{r,\ell}&=\operatorname{RMSNorm}
\big(a_{r,\ell}+B_\theta(a_{r,\ell})\big),\\
\pi_\theta(z_t)&=\operatorname{softmax}
\big(W_\pi[h_1;\ldots;h_L]+b_\pi\big).
\end{aligned}
\label{eq:component-parameterization}
\end{equation}
Here $B_\theta$ is a shared two-layer MLP. At masked positions, $f_{\theta,r,\ell}$ is the softmax of $\widetilde h_{r,\ell}^{\top}E_\ell(c)$ over codes occurring in the catalog. Visible codes are copied, giving $f_{\theta,r,\ell}(c\mid z_t)=\ind{c=x_{t,\ell}}$. Exact catalog conditioning yields
\begin{equation}
p_\theta(s\mid z_t)=\frac{F_\theta(s\mid z_t)}
{\octerm[\textstyle]{OCBlue}{\sum_{v\in\mathcal S}F_\theta(v\mid z_t)}{Legal-catalog mass}},
\qquad s\in\mathcal S.
\label{eq:sid-catalog}
\end{equation}

\paragraph{\textbf{Item-matching evidence.}}
With $\norm(a)=a/\lVert a\rVert_2$, the same hidden states produce
\begin{equation}
\begin{aligned}
u_\theta(z_t)&=\norm\!\big(M_\theta([h_1;\ldots;h_L])\big),\\
v_\theta(s)&=\norm\!\big(M_\theta(\sg[E_1(s_1);\ldots;E_L(s_L)])\big),\\
g_\theta(s\mid z_t)&=u_\theta(z_t)^\top v_\theta(s)/\tau_{\mathrm{item}},\\
r_\theta(s\mid z_t)&=\frac{\exp g_\theta(s\mid z_t)}
{\sum_{v\in\mathcal S}\exp g_\theta(v\mid z_t)}.
\end{aligned}
\label{eq:item-evidence}
\end{equation}
The stop-gradient operator $\sg$ detaches catalog-side code embeddings; the shared projection learns item matching. Both $p_\theta$ and $r_\theta$ are normalized over $\mathcal S$.

\subsection{Learning and Preserving the Served Fusion}
\label{sec:method-normalization}

\paragraph{\textbf{One distribution for supervision and ranking.}}
Let $\Theta=(\theta,\beta)$ and $\alpha=\operatorname{sigmoid}(\beta)$. A learned global gate defines the served distribution:
\begin{equation}
q_\Theta(s\mid z_t)=
\octerm{OCBlue}{\alpha p_\theta(s\mid z_t)}{SID evidence}
+\octerm{OCTeal}{(1-\alpha)r_\theta(s\mid z_t)}{Item evidence}.
\label{eq:fused-distribution}
\end{equation}
The next SID directly supervises this fusion at the fully masked endpoint:
\begin{equation}
\mathcal L_{\mathrm{fuse}}=-\mathbb E_{H,s^+}
\log q_\Theta(s^+\mid z_1).
\label{eq:fused-likelihood}
\end{equation}
Thus observed interactions train the branch balance and the same probabilities used to select recommendations.

\begin{proposition}[Supervised gate calibration]
\label{prop:gate-calibration}
For fixed branch probabilities $p_y,r_y>0$ at an observed SID, let $q_y=\alpha p_y+(1-\alpha)r_y$ and $\ell=-\log q_y$. Then
\begin{equation}
\frac{\partial\ell}{\partial\beta}=\alpha-\rho_y,
\qquad \rho_y=\octerm{OCBlue}{\frac{\alpha p_y}{q_y}}{SID responsibility}.
\label{eq:gate-gradient}
\end{equation}
The expected loss is convex in $\alpha$ with fixed branches.
\end{proposition}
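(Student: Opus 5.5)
The plan is a direct chain-rule computation for the gradient, followed by a second-derivative check for convexity. Both parts treat $p_y$ and $r_y$ as constants, since the branches are fixed.

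\textbf{Gradient.} Because $\alpha=\operatorname{sigmoid}(\beta)$, we have $d\alpha/d\beta=\alpha(1-\alpha)$. Also $\partial q_y/\partial\alpha=p_y-r_y$. Hence
\[
\frac{\partial\ell}{\partial\beta}=-\frac{(p_y-r_y)\,\alpha(1-\alpha)}{q_y}.
\]
The key step is to rewrite this in terms of the responsibility $\rho_y=\alpha p_y/q_y$. Its complement is $1-\rho_y=(1-\alpha)r_y/q_y$, using $q_y=\alpha p_y+(1-\alpha)r_y$. Then
\[
\alpha-\rho_y=\alpha(1-\rho_y)-(1-\alpha)\rho_y=\frac{\alpha(1-\alpha)r_y-(1-\alpha)\alpha p_y}{q_y}=-\frac{\alpha(1-\alpha)(p_y-r_y)}{q_y},
\]
which matches the chain-rule expression. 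Positivity of $p_y$ and $r_y$ guarantees $q_y>0$, so every quotient is well defined and $\rho_y\in(0,1)$ for $\alpha\in(0,1)$.

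\textbf{Convexity.} For a single observation, $\ell(\alpha)=-\log(\alpha p_y+(1-\alpha)r_y)$ is $-\log$ composed with an affine function of $\alpha$. Its second derivative is
\[
\frac{(p_y-r_y)^2}{q_y^2}\ge 0,
\]
so $\ell$ is convex on $[0,1]$. The expected loss is a nonnegative mixture, over histories and next SIDs, of these convex functions, so it is convex in $\alpha$ as well. If needed, one can justify exchanging expectation and differentiation, or simply invoke the fact that an expectation of convex functions is convex pointwise, which requires no regularity.

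\textbf{Main obstacle.} The only subtle point is conceptual. Convexity holds in $\alpha$, not in $\beta$: the sigmoid reparameterization generally destroys convexity. I would therefore state the convexity claim carefully in the $\alpha$ variable and not attempt to prove it in $\beta$.

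A natural corollary, which I would note briefly, concerns stationary points. At a stationary point in the interior of $(0,1)$, the gradient identity gives $\alpha=\mathbb E[\rho_y]$. In words, the gate equals the average SID responsibility, which is the calibration interpretation behind the proposition's title.
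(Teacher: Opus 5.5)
Your proof is correct and follows essentially the same route as the paper: the chain rule through $\partial\alpha/\partial\beta=\alpha(1-\alpha)$, rewriting the result as $\alpha-\rho_y$, the second derivative $(p_y-r_y)^2/q_y^2\ge 0$, and the fact that an expectation of convex functions is convex. Your extra remarks are also correct but do not appear in the paper: convexity fails in $\beta$ when $p_y\neq r_y$, and an interior stationary point satisfies $\alpha=\mathbb E[\rho_y]$.
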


\begin{proof}
Using $\partial\alpha/\partial\beta=\alpha(1-\alpha)$ gives
\begin{equation}
\frac{\partial\ell}{\partial\beta}
=\frac{\alpha(1-\alpha)(r_y-p_y)}{q_y}
=\alpha-\frac{\alpha p_y}{q_y}.
\end{equation}
For fixed branches,
\begin{equation}
\frac{\partial^2\ell}{\partial\alpha^2}
=\frac{(p_y-r_y)^2}{[\alpha p_y+(1-\alpha)r_y]^2}\geq0.
\end{equation}
Expectation preserves convexity in $\alpha$.
\end{proof}

The update calibrates the current gate against the SID branch's responsibility for the observed target. If the target receives more probability from the SID branch, its responsibility exceeds the current gate and gradient descent increases that branch's weight. The reverse update favors item evidence. This calibration is driven by observed interactions and is shared across users within a dataset. Ranking consistency additionally trains the same gate against refined fused predictions, as derived below.

\paragraph{\textbf{Preserving fusion on hard candidates.}}
Distillation concentrates on a candidate set $\mathcal A\subseteq\mathcal S$. For any distribution $a$, write $a(\mathcal A)=\sum_{s\in\mathcal A}a(s)$ and $a_{\mathcal A}(s)=a(s)/a(\mathcal A)$ on $\mathcal A$. Candidate conditioning induces a specific branch balance.

\begin{proposition}[Fusion under candidate conditioning]
\label{prop:conditioning}
Let $q=\alpha p+(1-\alpha)r$, with $0<\alpha<1$ and $p(\mathcal A),r(\mathcal A)>0$. Then
\begin{equation}
\begin{aligned}
q_{\mathcal A}&=\alpha_{\mathcal A}p_{\mathcal A}+(1-\alpha_{\mathcal A})r_{\mathcal A},\\
\alpha_{\mathcal A}&=\frac{\alpha p(\mathcal A)}
{\alpha p(\mathcal A)+(1-\alpha)r(\mathcal A)}.
\end{aligned}
\label{eq:conditional-gate}
\end{equation}
For $\widetilde q_{\mathcal A}=\alpha p_{\mathcal A}+(1-\alpha)r_{\mathcal A}$,
\begin{equation}
\TV(q_{\mathcal A},\widetilde q_{\mathcal A})=
\octerm{OCOrange}{|\alpha_{\mathcal A}-\alpha|}{Gate shift}
\octerm{OCTeal}{\TV(p_{\mathcal A},r_{\mathcal A})}{Branch gap}.
\label{eq:normalization-gap}
\end{equation}
\end{proposition}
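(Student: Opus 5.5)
The plan is a direct computation in two parts: first the conditional mixture identity, then the total-variation identity obtained by subtracting two mixtures over the same pair of conditioned branches.

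\textbf{Part 1 (mixture identity).} Set $Z=q(\mathcal A)=\alpha p(\mathcal A)+(1-\alpha)r(\mathcal A)$. This is strictly positive because $0<\alpha<1$ and $p(\mathcal A),r(\mathcal A)>0$. For $s\in\mathcal A$, substitute $p(s)=p(\mathcal A)p_{\mathcal A}(s)$ and $r(s)=r(\mathcal A)r_{\mathcal A}(s)$ into $q(s)/Z$. This gives
\[
q_{\mathcal A}(s)=\frac{\alpha p(\mathcal A)}{Z}\,p_{\mathcal A}(s)+\frac{(1-\alpha)r(\mathcal A)}{Z}\,r_{\mathcal A}(s).
\]
The first coefficient is $\alpha_{\mathcal A}$ by definition, and the two coefficients sum to one. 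Hence the second coefficient is $1-\alpha_{\mathcal A}$, which yields the first display of \eqref{eq:conditional-gate}. The same Bayes-responsibility form already appears as $\rho_y$ in Proposition~\ref{prop:gate-calibration}, with the single SID replaced by the set $\mathcal A$.

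\textbf{Part 2 (TV identity).} Subtract $\widetilde q_{\mathcal A}=\alpha p_{\mathcal A}+(1-\alpha)r_{\mathcal A}$ pointwise on $\mathcal A$:
\[
q_{\mathcal A}(s)-\widetilde q_{\mathcal A}(s)=(\alpha_{\mathcal A}-\alpha)p_{\mathcal A}(s)+\bigl[(1-\alpha_{\mathcal A})-(1-\alpha)\bigr]r_{\mathcal A}(s)=(\alpha_{\mathcal A}-\alpha)\bigl(p_{\mathcal A}(s)-r_{\mathcal A}(s)\bigr).
\]
Both distributions are supported on $\mathcal A$. Using $\TV(a,b)=\tfrac12\sum_{s\in\mathcal A}|a(s)-b(s)|$, the common scalar factor comes out of the absolute value. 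This gives $|\alpha_{\mathcal A}-\alpha|\cdot\tfrac12\sum_{s}|p_{\mathcal A}(s)-r_{\mathcal A}(s)|=|\alpha_{\mathcal A}-\alpha|\TV(p_{\mathcal A},r_{\mathcal A})$, which is \eqref{eq:normalization-gap}.

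\textbf{Main obstacle.} There is no real analytic difficulty; the one point needing care is the cancellation in Part 2. It depends on both mixtures using the \emph{same} conditioned branches $p_{\mathcal A},r_{\mathcal A}$ with complementary weights, so the difference collapses to a single rank-one term and the result is an exact equality rather than a bound. I would also state explicitly that $\TV$ is computed on $\mathcal A$, the common support of all four conditioned distributions, so that no mass outside $\mathcal A$ enters the sum.
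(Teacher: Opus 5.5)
Your proof is correct and matches the paper's argument essentially step for step: the paper also sets $Z=\alpha p(\mathcal A)+(1-\alpha)r(\mathcal A)$, rewrites $q(s)/Z$ on $\mathcal A$ as the reweighted mixture of $p_{\mathcal A}$ and $r_{\mathcal A}$, and obtains the TV identity by taking half the $\ell_1$ norm of the pointwise difference $(\alpha_{\mathcal A}-\alpha)(p_{\mathcal A}-r_{\mathcal A})$. You also check explicitly that $Z>0$ and that the two coefficients sum to one, which the paper leaves implicit.
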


\begin{proof}
Put $a=p(\mathcal A)$, $b=r(\mathcal A)$, and $Z=\alpha a+(1-\alpha)b$. On $\mathcal A$,
\begin{equation}
\frac{q(s)}{Z}=\frac{\alpha a}{Z}p_{\mathcal A}(s)
+\frac{(1-\alpha)b}{Z}r_{\mathcal A}(s).
\end{equation}
Subtracting $\widetilde q_{\mathcal A}$ gives $(\alpha_{\mathcal A}-\alpha)(p_{\mathcal A}-r_{\mathcal A})$. Taking half the $\ell_1$ norm proves the result.
\end{proof}

The ordering matters even for two candidates. With $p=(0.08,0.02,0.90)$, $r=(0.36,0.54,0.10)$, and $\alpha=1/2$, fusion ranks the second candidate ahead of the first. Conditioning each branch on the first two candidates and then equally fusing gives $(0.6,0.4)$, reversing that order. OneChord therefore \textbf{normalizes over the catalog, fuses, and then conditions}; its distillation target preserves the deployed scoring rule.

\subsection{Refining and Distilling Fused Rankings}
\label{sec:method-distillation}

\paragraph{\textbf{Legal refinement states.}}
A masked denoiser $D_\psi(s\mid z_t)$ uses one categorical component and visible-code copying. Its training objective is
\begin{equation}
\mathcal L_{\mathrm{teacher}}=
\mathbb E_{H,s^+,t,x_t\sim Q_t(\cdot\mid s^+)}
\left[-\frac{\log D_\psi(s^+\mid z_t)}{Lt}\right],
\label{eq:teacher-training}
\end{equation}
with $t$ drawn from a positive noise grid. After freezing $\psi$, catalog conditioning gives $p_\psi$. For $u<t$, sample one legal completion and reveal each masked code with probability $1-u/t$:
\begin{equation}
K_\psi(x_u\mid z_t;u)=\sum_{s\in\mathcal S}
\octerm{OCPurple}{p_\psi(s\mid z_t)}{Legal completion}
\octerm{OCTeal}{\prod_{\ell=1}^{L}\kappa_{t,u}(x_{u,\ell}\mid x_{t,\ell},s_\ell)}{Copy / reveal}.
\label{eq:teacher-kernel}
\end{equation}
The position kernel has the explicit form
\begin{equation}
\kappa_{t,u}(a\mid b,c)=
\begin{cases}
\ind{a=b},&b\neq\mathtt M,\\
\dfrac{u}{t}\ind{a=\mathtt M}
+\left(1-\dfrac{u}{t}\right)\ind{a=c},&b=\mathtt M.
\end{cases}
\label{eq:explicit-kernel}
\end{equation}
It copies visible codes and reveals masked positions from the same completion. If $\widehat s$ is sampled from the compatible catalog, every revealed code agrees with $\widehat s$; hence the refined state retains a legal completion. Appendix~\ref{app:teacher-kernel} extends this support argument to successive transitions.

\paragraph{\textbf{Refined targets in the served distribution.}}
Write $q_1=q_\Theta(\cdot\mid z_1)$. Starting from the fully masked state, sample $x_u\sim K_\psi(\cdot\mid z_1;u)$ and form $\bar q_u=\sg[q_\Theta(\cdot\mid H,x_u,u)]$. The denoiser supplies the state; the current student supplies the refined \emph{fused} ranking. Target evaluation reuses detached history memory. Its gradients are stopped, so the direct prediction learns to reproduce that ranking.

Let $\mathcal B$ contain $s^+$ and sampled catalog negatives. Denote by $\Top_b(a)$ the $b$ highest-scoring SIDs under $a$ among $\mathcal S\setminus\{s^+\}$. The shared candidate set is
\begin{equation}
\mathcal A=\uniq\!\big(
\octerm{OCTeal}{\mathcal B}{Supervised candidates}\cup
\octerm{OCOrange}{\Top_b(q_1)\cup\Top_b(\bar q_u)}{Hard union}\big).
\label{eq:candidate-union}
\end{equation}
For temperature $T>0$, post-fusion restriction is
\begin{equation}
\mathcal R_{\mathcal A,T}[q](s)=
\frac{\ind{s\in\mathcal A}\,q(s)^{1/T}}
{\sum_{v\in\mathcal A}q(v)^{1/T}}.
\label{eq:ranking-temperature}
\end{equation}
Ranking consistency matches the two resulting distributions:
\begin{equation}
\begin{gathered}
P=\octerm{OCPurple}{\mathcal R_{\mathcal A,T}[\bar q_u]}{Refined target},\qquad
Q=\octerm{OCBlue}{\mathcal R_{\mathcal A,T}[q_1]}{Direct prediction},\\
\mathcal L_{\mathrm{rank}}=T^2\mathbb E_{t=1}\KL(P\|Q).
\end{gathered}
\label{eq:ranking-distillation}
\end{equation}
Detached candidate selection keeps gradients on both scoring branches and their fusion gate. They learn on candidates competing for the final list.

\paragraph{\textbf{How ranking transfer calibrates fusion.}}
For fixed branches, candidates, and stopped target $P$, define the SID responsibility $\rho_s=\alpha p_\theta(s\mid z_1)/q_1(s)$. The per-example ranking loss has gate gradient
\begin{equation}
\frac{\partial\,[T^2\KL(P\|Q)]}{\partial\beta}
=T\sum_{s\in\mathcal A}(Q(s)-P(s))\rho_s.
\label{eq:ranking-gate-gradient}
\end{equation}
Thus ranking transfer calibrates the gate using branch responsibilities across the competing candidates, complementing the label responsibility in Eq.~\eqref{eq:gate-gradient}. Both signals update the fusion used at serving. The identity follows by differentiating the temperature-scaled conditional distribution; Appendix~\ref{app:gate-proof} gives the derivation.

\begin{proposition}[Ranking fidelity of the candidate union]
\label{prop:ranking-fidelity}
With $P,Q$ from Eq.~\eqref{eq:ranking-distillation}, let $\varepsilon=\KL(P\|Q)$ and let $A_P,A_Q\subset\mathcal A$ be their Top-$K$ sets. Then
\begin{equation}
P(A_P)-P(A_Q)\leq\sqrt{2\varepsilon}.
\label{eq:topk-regret}
\end{equation}
If $|\mathcal A|>K$ and the sorted target probabilities satisfy
\begin{equation}
\octerm{OCTeal}{P_{(K)}-P_{(K+1)}}{Target margin}
>\octerm{OCOrange}{\sqrt{2\varepsilon}}{Consistency error},
\label{eq:topk-margin}
\end{equation}
then $A_P=A_Q$. When $b\geq K$, the candidate union contains both full-catalog Top-$K$ sets, so the Top-$K$ sets of $\bar q_u$ and $q_1$ also coincide.
\end{proposition}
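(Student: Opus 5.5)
The plan is to reduce everything to a total-variation bound between $P$ and $Q$ via Pinsker's inequality, and then compare the two Top-$K$ sets by exchange arguments. Since $P$ and $Q$ are both distributions supported on $\mathcal A$, Pinsker gives $\TV(P,Q)\le\sqrt{\varepsilon/2}$, equivalently $\lVert P-Q\rVert_1\le\sqrt{2\varepsilon}$. This single inequality will drive both the regret bound \eqref{eq:topk-regret} and the margin condition \eqref{eq:topk-margin}.

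For \eqref{eq:topk-regret}, I would use the three-term decomposition
\begin{equation*}
P(A_P)-P(A_Q)=[P(A_P)-Q(A_P)]+[Q(A_P)-Q(A_Q)]+[Q(A_Q)-P(A_Q)].
\end{equation*}
The middle term is nonpositive, because $A_Q$ maximizes $Q$-mass among all $K$-subsets of $\mathcal A$. For the outer two terms, I would not bound each by $\TV$ separately. Write $D=P-Q$, which has total sum zero, and set $D_+=\sum_s\max(D(s),0)=\TV(P,Q)$. The first bracket is $D(A_P)$ and the third is $-D(A_Q)$. Dropping the common part $A_P\cap A_Q$ leaves $D(A_P\setminus A_Q)-D(A_Q\setminus A_P)$. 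Because $A_P\setminus A_Q$ and $A_Q\setminus A_P$ are disjoint, this is at most the sum of the positive parts of $D$ plus the sum of its negative parts, which is $2D_+=\lVert P-Q\rVert_1$. Hence the total is at most $\lVert P-Q\rVert_1\le\sqrt{2\varepsilon}$.

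For the margin statement, I would argue by contradiction. The margin \eqref{eq:topk-margin} makes $A_P$ unique; it is a genuine condition because $|\mathcal A|>K$ guarantees $P_{(K+1)}$ exists. Suppose $A_Q\neq A_P$. Since both sets have size $K$, there are elements $s\in A_P\setminus A_Q$ and $v\in A_Q\setminus A_P$.
\begin{itemize}
\item From the definition of the Top-$K$ sets, $P(s)-P(v)\ge P_{(K)}-P_{(K+1)}>\sqrt{2\varepsilon}$ and $Q(v)\ge Q(s)$.
\item Consequently $P(s)-P(v)\le[P(s)-Q(s)]+[Q(v)-P(v)]$.
\item Because $s\neq v$, the right-hand side is at most $\lVert P-Q\rVert_1\le\sqrt{2\varepsilon}$.
\end{itemize}
These inequalities contradict each other, so $A_Q=A_P$.

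The final claim is the step I expect to need the most care. It rests on two facts.
\begin{itemize}
\item The map $x\mapsto x^{1/T}$ is strictly increasing for $T>0$. Therefore $\mathcal R_{\mathcal A,T}$ preserves the order of $\bar q_u$ and $q_1$ on $\mathcal A$, so $A_P$ and $A_Q$ are the Top-$K$ sets of $\bar q_u$ and $q_1$ restricted to $\mathcal A$.
\item The full-catalog Top-$K$ set of either distribution lies in $\mathcal A$. It consists either of the top $K$ elements of $\mathcal S\setminus\{s^+\}$, or of $s^+$ together with the top $K-1$ of the rest. When $b\ge K$, the former lies in $\Top_b$, and $s^+\in\mathcal B\subseteq\mathcal A$.
\end{itemize}
Every element of $\mathcal S\setminus\mathcal A$ therefore scores no higher than each full Top-$K$ element, so the full-catalog Top-$K$ set equals the Top-$K$ set restricted to $\mathcal A$, i.e.\ $A_P$ and $A_Q$ respectively. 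The equality $A_P=A_Q$ then transfers to the full catalog. The subtlety is ties at the $K$-th position under $q_1$ or at the $\Top_b$ boundary. I would handle them by fixing one deterministic tie-breaking rule used consistently for $\Top_b$ and for Top-$K$. The margin already rules out ties for $P$, and hence for $\bar q_u$ within $\mathcal A$.
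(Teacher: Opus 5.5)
Your proposal is correct and follows the paper's proof essentially step for step: Pinsker's inequality gives $\lVert P-Q\rVert_1\le\sqrt{2\varepsilon}$, and optimality of $A_Q$ under $Q$ reduces the regret to $(P-Q)(A_P)-(P-Q)(A_Q)\le 2\TV(P,Q)$. The margin step compares an element of $A_P$ with one outside it (you argue by contradiction, the paper shows $Q(a)>Q(c)$ directly), and the final claim combines monotonicity of $x\mapsto x^{1/T}$ with coverage of each full-catalog Top-$K$ set by $\mathrm{Top}_b\cup\{s^+\}$, while your explicit tie-breaking convention is a small refinement the paper leaves implicit. Bounding each outer bracket by $\TV(P,Q)$ separately loses nothing; it gives the same $2\TV(P,Q)$ that the paper obtains for the combined difference.
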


\begin{proof}
Since $A_Q$ maximizes $Q$-mass among size-$K$ candidate sets,
\begin{equation}
\begin{aligned}
P(A_P)-P(A_Q)
&\leq(P-Q)(A_P)-(P-Q)(A_Q)\\
&\leq2\TV(P,Q)\leq\sqrt{2\varepsilon}.
\end{aligned}
\end{equation}
Put $\gamma=P_{(K)}-P_{(K+1)}$. For $a\in A_P$ and $c\notin A_P$,
\begin{equation}
\begin{aligned}
Q(a)-Q(c)&\geq\gamma-|Q(a)-P(a)|-|Q(c)-P(c)|\\
&\geq\gamma-\lVert P-Q\rVert_1
\geq\gamma-\sqrt{2\varepsilon}>0.
\end{aligned}
\end{equation}
Therefore $A_P=A_Q$. Restriction and positive temperature scaling preserve within-set order. For $b\geq K$, each full-catalog Top-$K$ lies in its $b$ leading non-label candidates together with $s^+\in\mathcal B$. Both sets lie in $\mathcal A$, completing the argument.
\end{proof}

Here $\varepsilon$ is the per-prediction KL before the factor $T^2$. The result gives a selection guarantee through candidate agreement and coverage; Appendix~\ref{app:ranking-proof} relates it to the full-distribution KL.

% Queue the main result table ahead of the final training details.
% Accuracy is declared in the method; cost is declared on the next text page.
\begin{table*}[t]
\centering
\caption{Full-catalog test Recall/NDCG@5. OneChord: mean $\pm$ standard deviation. Amber: strongest baseline; blue: OneChord; bold: column best. Improv.\ is the relative gain over the strongest baseline.}
\label{tab:main-results}
\begingroup
\small
\renewcommand{\arraystretch}{1.10}
\OCMainTableFit{14}{ll*{12}{r}}{%
\toprule
& & \multicolumn{2}{c}{\textbf{Sports}} & \multicolumn{2}{c}{\textbf{Beauty}} & \multicolumn{2}{c}{\textbf{Toys}} & \multicolumn{2}{c}{\textbf{Scientific}} & \multicolumn{2}{c}{\textbf{Instruments}} & \multicolumn{2}{c}{\textbf{Games}} \\
\cmidrule(lr){3-4}\cmidrule(lr){5-6}\cmidrule(lr){7-8}\cmidrule(lr){9-10}\cmidrule(lr){11-12}\cmidrule(lr){13-14}
\rowcolor{OCTableHeader}
\textbf{Type} & \textbf{Method} & R@5 & N@5 & R@5 & N@5 & R@5 & N@5 & R@5 & N@5 & R@5 & N@5 & R@5 & N@5 \\
\midrule
\multirow{5}{*}{\textcolor{OCTableCategoryText}{\shortstack[l]{Item\\ID-based}}} & GRU4Rec~\cite{hidasi2016gru4rec}~(2016) & 0.0128 & 0.0087 & 0.0165 & 0.0098 & 0.0098 & 0.0060 & 0.0186 & 0.0127 & 0.0293 & 0.0195 & 0.0459 & 0.0310 \\
 & SASRec~\cite{kang2018sasrec}~(2018) & 0.0232 & 0.0155 & 0.0390 & 0.0248 & 0.0462 & 0.0304 & 0.0237 & 0.0150 & 0.0335 & 0.0212 & 0.0512 & 0.0326 \\
 & BERT4Rec~\cite{sun2019bert4rec}~(2019) & 0.0116 & 0.0074 & 0.0202 & 0.0125 & 0.0117 & 0.0070 & 0.0159 & 0.0101 & 0.0252 & 0.0158 & 0.0311 & 0.0198 \\
 & PreferDiff~\cite{liu2025preferdiff}~(2025) & 0.0276 & 0.0189 & 0.0453 & 0.0318 & 0.0605 & 0.0406 & 0.0163 & 0.0115 & 0.0210 & 0.0140 & 0.0353 & 0.0244 \\
 & FAVE~\cite{shi2026fave}~(2026) & 0.0303 & 0.0193 & 0.0583 & 0.0407 & 0.0615 & 0.0447 & 0.0229 & 0.0144 & 0.0324 & 0.0208 & 0.0544 & 0.0350 \\
\midrule
\multirow{4}{*}{\textcolor{OCTableCategoryText}{\shortstack[l]{SID-based\\(AR)}}} & TIGER~\cite{rajput2023tiger}~(2023) & 0.0267 & 0.0179 & 0.0455 & 0.0325 & 0.0519 & 0.0373 & 0.0279 & 0.0182 & 0.0360 & 0.0234 & 0.0528 & 0.0345 \\
 & ActionPiece~\cite{hou2025actionpiece}~(2025) & 0.0317 & 0.0203 & 0.0512 & 0.0341 & 0.0473 & 0.0311 & 0.0269 & 0.0172 & 0.0390 & 0.0245 & 0.0583 & 0.0381 \\
 & COBRA~\cite{yang2025cobra}~(2025) & 0.0301 & 0.0216 & 0.0539 & 0.0394 & 0.0621 & 0.0461 & 0.0275 & 0.0178 & 0.0345 & 0.0229 & 0.0473 & 0.0309 \\
 & SID-MLP~\cite{guo2026sidmlp}~(2026) & 0.0263 & 0.0174 & 0.0442 & 0.0303 & 0.0378 & 0.0243 & 0.0292 & 0.0194 & 0.0398 & 0.0254 & 0.0575 & 0.0381 \\
\midrule
\multirow{5}{*}{\textcolor{OCTableCategoryText}{\shortstack[l]{SID-based\\(NAR)}}} & RPG~\cite{hou2025rpg}~(2025) & 0.0312 & 0.0214 & 0.0546 & 0.0377 & 0.0593 & 0.0397 & 0.0259 & 0.0172 & 0.0358 & 0.0242 & 0.0582 & 0.0394 \\
 & LLaDA-Rec~\cite{shi2025lladarec}~(2025) & 0.0280 & 0.0186 & 0.0508 & 0.0348 & 0.0509 & 0.0345 & 0.0308 & 0.0202 & 0.0403 & 0.0265 & 0.0624 & 0.0414 \\
 & DiffGRM~\cite{liu2026diffgrm}~(2026) & 0.0366 & 0.0248 & 0.0605 & 0.0413 & 0.0617 & 0.0453 & 0.0404 & 0.0276 & 0.0430 & 0.0284 & 0.0619 & 0.0405 \\
 & OneGR~\cite{wang2026onegr}~(2026) & 0.0375 & 0.0254 & 0.0632 & 0.0430 & 0.0632 & 0.0462 & 0.0353 & 0.0244 & 0.0367 & 0.0248 & 0.0545 & 0.0366 \\
 & SPRINT~\cite{cai2026sprint}~(2026) & \cellcolor{OCTableBestBaseline}0.0413 & \cellcolor{OCTableBestBaseline}0.0271 & \cellcolor{OCTableBestBaseline}0.0674 & \cellcolor{OCTableBestBaseline}0.0452 & \cellcolor{OCTableBestBaseline}0.0705 & \cellcolor{OCTableBestBaseline}0.0500 & \cellcolor{OCTableBestBaseline}0.0442 & \cellcolor{OCTableBestBaseline}0.0300 & \cellcolor{OCTableBestBaseline}0.0460 & \cellcolor{OCTableBestBaseline}0.0306 & \cellcolor{OCTableBestBaseline}0.0635 & \cellcolor{OCTableBestBaseline}0.0426 \\
\midrule
\rowcolor{OCTableOurs}
\textcolor{OCTableCategoryText}{Ours} & \textbf{OneChord} & \OCMeanSD{0.0438}{0.0008} & \OCMeanSD{0.0295}{0.0006} & \OCMeanSD{0.0731}{0.0010} & \OCMeanSD{0.0497}{0.0008} & \OCMeanSD{0.0746}{0.0011} & \OCMeanSD{0.0546}{0.0009} & \OCMeanSD{0.0475}{0.0009} & \OCMeanSD{0.0325}{0.0007} & \OCMeanSD{0.0480}{0.0008} & \OCMeanSD{0.0319}{0.0006} & \OCMeanSD{0.0671}{0.0007} & \OCMeanSD{0.0435}{0.0005} \\
\addlinespace[1pt]
 & \textit{Improv.} & 6.05\% & 8.86\% & 8.46\% & 9.96\% & 5.82\% & 9.20\% & 7.47\% & 8.33\% & 4.35\% & 4.25\% & 5.67\% & 2.11\% \\
\bottomrule
}
\endgroup

\end{table*}

\subsection{SID Consistency across Refinement States}
\label{sec:method-sid-consistency}

Boundary and composition objectives~\cite{hayakawa2025di4c} additionally train the SID branch across noise levels. For $D$ grid intervals and $\delta=1/D$, sample $x_\delta\sim Q_\delta(\cdot\mid s^+)$. Boundary matching uses
\begin{equation}
\mathcal L_{\mathrm{bd}}=\frac{1}{L}\mathbb E
\KL\!\big(\octerm{OCPurple}{p_\psi(\cdot\mid z_\delta)}{Frozen denoiser}\,\big\|\,
\octerm{OCBlue}{p_\theta(\cdot\mid z_\delta)}{Student}\big).
\label{eq:boundary}
\end{equation}
For $u=t-\delta$, let $\bar p_\theta$ denote the stopped student and $\mathcal X=\prod_\ell(\mathcal V_\ell\cup\{\mathtt M\})$. Its averaged refinement is
\begin{equation}
\begin{aligned}
m_{t,u}(s\mid z_t)&=(\mathcal T_\psi^{t,u}\bar p_\theta)(s\mid z_t)\\
&=\int_{\mathcal X}\bar p_\theta(s\mid H,x,u)\,
K_\psi(\mathrm dx\mid z_t;u).
\end{aligned}
\label{eq:refinement-average}
\end{equation}
On this finite space, the integral is a sum over refinement outcomes. With $\nu_D$ the grid sampling measure, composition consistency is
\begin{equation}
\begin{aligned}
\mathcal L_{\mathrm{comp}}&=\frac{1}{L}\mathbb E_{H,s^+}
\int_{(\delta,1]}\mathbb E_{x_t\sim Q_t(\cdot\mid s^+)}\\
&\quad\CE\!\big(\octerm{OCPurple}{m_{t,u}(\cdot\mid z_t)}{Refinement average},
\octerm{OCBlue}{p_\theta(\cdot\mid z_t)}{Source prediction}\big)\,\nu_D(\mathrm dt).
\end{aligned}
\label{eq:composition}
\end{equation}
Linearity in the target gives the sampled objective:
\begin{equation}
\begin{aligned}
&\mathbb E_{x_u\sim K_\psi}
\CE(\bar p_\theta(\cdot\mid z_u),p_\theta(\cdot\mid z_t))\\
&\qquad=\CE(m_{t,u},p_\theta(\cdot\mid z_t)).
\end{aligned}
\label{eq:sampled-composition}
\end{equation}
A single refined state supplies an unbiased estimate of the composition objective and its online gradient, conditional on the source state and stopped target generation. The expectation averages across legal reveal patterns, while each sampled target remains a distribution over the catalog. This permits stochastic training of the refinement operator through ordinary cross-entropy without enumerating the space of masked states. The teacher kernel determines which states provide the consistency signal.

For $a_x=\bar p_\theta(\cdot\mid H,x,u)$, $m=\mathbb E_xa_x$, and $b=p_\theta(\cdot\mid z_t)$, the KL decomposition further yields
\begin{equation}
\begin{aligned}
\mathbb E_x\KL(a_x\|b)
&=\octerm{OCBlue}{\KL(m\|b)}{Direct prediction error}\\
&\quad+\octerm{OCPurple}{\mathbb E_x\KL(a_x\|m)}{Refinement variation}.
\end{aligned}
\label{eq:refinement-kl-decomposition}
\end{equation}
With stopped targets, the second term is constant for the online update. Composition therefore trains the direct SID prediction toward the refinement average. Appendix~\ref{app:composition} derives these identities and the exact finite-grid measure.

\subsection{Optimization and One-Pass Recommendation}
\label{sec:method-training}

\paragraph{\textbf{Joint objective.}}
The direct branch losses use the same supervised multiset $\mathcal B$:
\begin{equation}
\begin{aligned}
\mathcal L_{\mathrm{sid}}&=-\mathbb E\log
\frac{F_\theta(s^+\mid z_1)}{\sum_{v\in\mathcal B}F_\theta(v\mid z_1)},\\
\mathcal L_{\mathrm{item}}&=-\mathbb E\log
\frac{\exp g_\theta(s^+\mid z_1)}{\sum_{v\in\mathcal B}\exp g_\theta(v\mid z_1)},\\
\mathcal L_{\mathrm{sup}}&=\lambda_s\mathcal L_{\mathrm{sid}}+
\lambda_i\mathcal L_{\mathrm{item}}+\lambda_f\mathcal L_{\mathrm{fuse}}.
\end{aligned}
\label{eq:supervised-losses}
\end{equation}
The fused likelihood uses the complete catalog; branch losses retain sampling multiplicity. At step $n$, the student minimizes
\begin{equation}
\begin{aligned}
\mathcal L(n)&=\octerm{OCTeal}{\mathcal L_{\mathrm{sup}}}{Next-item supervision}\\
&\quad+w(n)\big[\octerm{OCPurple}{\lambda_b\mathcal L_{\mathrm{bd}}}{Boundary}
+\octerm{OCBlue}{d(n)\lambda_c\mathcal L_{\mathrm{comp}}}{SID consistency}\\
&\hspace{6.2em}+\octerm{OCOrange}{\lambda_r\mathcal L_{\mathrm{rank}}}{Fused ranking}\big].
\end{aligned}
\label{eq:total-loss}
\end{equation}
The ramp $w(n)$ establishes next-item prediction before increasing distillation; $d(n)$ later reduces SID composition while maintaining fused ranking transfer. The student starts from an independent initialization. Algorithm~\ref{alg:onechord} summarizes training and serving; Appendix~\ref{app:implementation} specifies the schedules and coefficients.

\begin{table*}[t]
\centering
\caption{Generation time (s) over the complete test set and the speedup of OneChord over each baseline.}
\label{tab:inference-cost}
\begingroup
\small
\renewcommand{\arraystretch}{1.07}
\OCMainTableFit{13}{l*{12}{r}}{%
\toprule
& \multicolumn{2}{c}{\textbf{Sports}} & \multicolumn{2}{c}{\textbf{Beauty}} & \multicolumn{2}{c}{\textbf{Toys}} & \multicolumn{2}{c}{\textbf{Scientific}} & \multicolumn{2}{c}{\textbf{Instruments}} & \multicolumn{2}{c}{\textbf{Games}} \\
\cmidrule(lr){2-3}\cmidrule(lr){4-5}\cmidrule(lr){6-7}\cmidrule(lr){8-9}\cmidrule(lr){10-11}\cmidrule(lr){12-13}
\rowcolor{OCTableHeader}
\textbf{Method} & Cost & Speedup & Cost & Speedup & Cost & Speedup & Cost & Speedup & Cost & Speedup & Cost & Speedup \\
\midrule
TIGER~\cite{rajput2023tiger}~(2023) & 10.0199 & 22.65$\times$ & 6.2677 & 22.94$\times$ & 5.4396 & 23.28$\times$ & 14.3327 & 21.39$\times$ & 16.1564 & 21.18$\times$ & 26.7062 & 21.47$\times$ \\
ActionPiece~\cite{hou2025actionpiece}~(2025) & 39.2035 & 88.64$\times$ & 24.7523 & 90.60$\times$ & 21.4593 & 91.82$\times$ & 56.9978 & 85.08$\times$ & 63.8786 & 83.74$\times$ & 105.9439 & 85.16$\times$ \\
RPG~\cite{hou2025rpg}~(2025) & 0.7240 & 1.64$\times$ & 0.5641 & 2.06$\times$ & 0.3615 & 1.55$\times$ & 1.6441 & 2.45$\times$ & 1.8188 & 2.38$\times$ & 3.0426 & 2.45$\times$ \\
LLaDA-Rec~\cite{shi2025lladarec}~(2025) & 85.1093 & 192.43$\times$ & 56.8194 & 207.96$\times$ & 45.6964 & 195.54$\times$ & 121.7687 & 181.76$\times$ & 136.7625 & 179.28$\times$ & 228.0063 & 183.27$\times$ \\
DiffGRM~\cite{liu2026diffgrm}~(2026) & 2.1590 & 4.88$\times$ & 1.3601 & 4.98$\times$ & 1.1404 & 4.88$\times$ & 3.0965 & 4.62$\times$ & 3.4338 & 4.50$\times$ & 5.7371 & 4.61$\times$ \\
SPRINT~\cite{cai2026sprint}~(2026) & \cellcolor{OCTableBestBaseline}0.5007 & 1.13$\times$ & \cellcolor{OCTableBestBaseline}0.2844 & 1.04$\times$ & \cellcolor{OCTableBestBaseline}0.2477 & 1.06$\times$ & \cellcolor{OCTableBestBaseline}0.7681 & 1.15$\times$ & \cellcolor{OCTableBestBaseline}0.8861 & 1.16$\times$ & \cellcolor{OCTableBestBaseline}1.4658 & 1.18$\times$ \\
\midrule
\rowcolor{OCTableOurs}
\textbf{OneChord} & \textbf{0.4423} & --- & \textbf{0.2732} & --- & \textbf{0.2337} & --- & \textbf{0.6700} & --- & \textbf{0.7629} & --- & \textbf{1.2441} & --- \\
\bottomrule
}
\endgroup

\end{table*}

\paragraph{\textbf{How fused supervision trains the branches.}}
The observed target calibrates both the gate and the evidence combined by that gate. With $y=s^+$, $p_y=p_\theta(y\mid z_1)$, $r_y=r_\theta(y\mid z_1)$, and $\rho_y=\alpha p_y/q_y$, the fused likelihood gives
\begin{equation}
\begin{aligned}
\nabla_\theta[-\log q_y]
&=\octerm{OCBlue}{\rho_y\nabla_\theta[-\log p_y]}{SID contribution}\\
&\quad+\octerm{OCTeal}{(1-\rho_y)\nabla_\theta[-\log r_y]}{Item contribution}.
\end{aligned}
\label{eq:branch-responsibility-gradient}
\end{equation}
This identity holds with shared parameters: each branch contributes through the same history encoder and decoder. Its target responsibility weights the branch gradient, while Eq.~\eqref{eq:gate-gradient} updates the scalar fusion logit. The branch objectives additionally sharpen SID and item discrimination on sampled negatives. Together, these signals connect representation learning to the probability assigned by the final fusion.

Ranking distillation supplies a complementary signal over competing candidates. Its soft target distributes probability across the shared hard union, and Eq.~\eqref{eq:ranking-gate-gradient} aggregates their branch responsibilities. The observed next SID thus anchors prediction, while refined rankings train the relative allocation of probability around that target. The same $q_\Theta$ receives both updates and determines the recommendation list.

\begin{algorithm}[t]
\caption{OneChord: learn and distill the served distribution}
\label{alg:onechord}
\small
\raggedright
\setlength{\tabcolsep}{0pt}
\renewcommand{\arraystretch}{1.08}
\begin{tabular}{@{}>{\color{OCTableCategoryText}\footnotesize}r@{\hspace{0.7em}}>{\raggedright\arraybackslash}p{\dimexpr\linewidth-1.7em\relax}@{}}
\multicolumn{2}{@{}l@{}}{\textbf{Input}\enspace Catalog $\mathcal S$; frozen denoiser $p_\psi$; student $\Theta$.}\\[3pt]
\rowcolor{OCTableOurs}
\multicolumn{2}{@{}p{\linewidth}@{}}{\strut\hspace{3pt}\textcolor{OCBlue}{\textbf{Training}}\hfill\textit{Minibatch $(H,s^+)$ at step $n$}\hspace*{3pt}}\\[2pt]
1 & \textbf{Predict:} encode $H$, decode $z_1$, and form $q_1$ over $\mathcal S$.\\[1pt]
2 & \textbf{Supervise:} compute branch losses and fused likelihood.\\[1pt]
3 & \textbf{Refine:} sample noise, masks, and legal states using $K_\psi$.\\[1pt]
4 & \textbf{Match SIDs:} compute $\mathcal L_{\mathrm{bd}},\mathcal L_{\mathrm{comp}}$ with stopped targets.\\[1pt]
5 & For $t=1$, evaluate stopped refined fusion $\bar q_u$ and build the hard union $\mathcal A$ from $q_1,\bar q_u,\mathcal B$.\\[1pt]
6 & \textbf{Match rankings:} restrict $q_1,\bar q_u$ with $\mathcal R_{\mathcal A,T}$ and compute $\mathcal L_{\mathrm{rank}}$ on fully masked sources.\\[1pt]
7 & \textbf{Update} $\Theta$ with the scheduled objective $\mathcal L(n)$.\\[3pt]
\rowcolor{OCTableOurs}
\multicolumn{2}{@{}p{\linewidth}@{}}{\strut\hspace{3pt}\textcolor{OCBlue}{\textbf{Serving}}\hfill\textit{One forward pass}\hspace*{3pt}}\\[2pt]
8 & Encode $H$ once, decode $z_1$ once, and form $q_1$ over $\mathcal S$.\\[1pt]
9 & \textbf{return} $\Top_K\{\log q_1(s):s\in\mathcal S\}$.\\
\end{tabular}
\end{algorithm}

\paragraph{\textbf{Serving.}}
OneChord evaluates its shared encoder and decoder once at $z_1$ and returns
\begin{equation}
\widehat{\mathcal S}_K(H)=\Top_K\{\log q_\Theta(s\mid z_1):s\in\mathcal S\}.
\label{eq:inference}
\end{equation}
Cached vectors $v_\theta(s)$ give $O(|\mathcal S|d)$ item matching for width $d$. Joint SID scoring costs $O(|\mathcal S|RL)$ after token prediction, followed by Top-$K$ selection on the fused scores. Refinement is transferred during training; serving uses one history encoding, one parallel decoding, and catalog scoring.

\paragraph{\textbf{Shared computation over the catalog.}}
The decoder produces a token-probability table for every component and position. Every legal SID reuses these tables; its joint score is evaluated as
\begin{equation}
\log F_\theta(s\mid z_1)=\operatorname{LSE}_{r}
\left[\log\pi_{\theta,r}+\sum_{\ell=1}^{L}\log f_{\theta,r,\ell}(s_\ell)\right],
\label{eq:catalog-log-mixture}
\end{equation}
where $\operatorname{LSE}$ denotes log-sum-exp. The item branch similarly reuses one decoder-derived user vector against cached catalog vectors. Both normalizers are computed over the same $\mathcal S$, after which the learned gate supplies the final scores. Increasing the catalog enlarges these shared scoring operations while retaining one encoder evaluation and one parallel decoder evaluation. This separates the neural prediction budget from catalog coverage and permits exact legal-SID ranking within the one-pass serving path.

\section{Experiments}
\label{sec:experiments}

The experiments establish accuracy and generation efficiency, examine supervision and distillation of the served fusion, and characterize its behavior across branch weights and interaction sparsity.

\subsection{Experimental Setup}
\label{sec:experimental-setup}

\paragraph{\textbf{Datasets and Evaluation.}}
We evaluate six recommendation datasets: Sports, Beauty, and Toys from Amazon Reviews 2014~\cite{mcauley2015amazon}, and Scientific, Instruments, and Games from Amazon Reviews 2023~\cite{hou2026amazon}. Table~\ref{tab:dataset-statistics} gives their statistics. We report Recall@5 (R@5) and NDCG@5 (N@5) from rankings over the complete legal SID catalog~\cite{hou2025rpg,cai2026sprint}; a hit requires all four predicted codes to match the held-out target SID. Generation cost is the total time over the complete test set. Data splits, model selection, and run settings are specified in Appendix~\ref{app:experimental-protocol}.

\paragraph{\textbf{Baselines.}}
We compare with $14$ baselines spanning item ID-based, autoregressive SID-based, and non-autoregressive SID-based recommendation. Table~\ref{tab:main-results} lists the methods and their references.

\paragraph{\textbf{Implementation Details.}}
OneChord uses a shared OPQ catalog and a bidirectional Transformer backbone, with parallel SID prediction and learnable SID--item fusion. Detailed architecture, training configurations, evaluation procedures, and reproducibility settings are provided in Appendices~\ref{app:implementation} and~\ref{app:experimental-protocol}.

\subsection{Overall Performance and Efficiency}
\label{sec:overall-performance}

Table~\ref{tab:main-results} shows that OneChord achieves the highest Recall@5 and NDCG@5 on all six datasets. Relative to the strongest baseline in each column, the gains average \textbf{6.30\%} in Recall@5 and \textbf{7.12\%} in NDCG@5, reaching \textbf{9.96\%} in NDCG@5 on Beauty. These improvements span all three baseline families while retaining one parallel decoding pass at inference.

The gains cover both target retrieval and the quality of its position in the list. On Sports, Beauty, Toys, and Scientific, relative NDCG@5 improvements are $8.86\%$, $9.96\%$, $9.20\%$, and $8.33\%$, respectively; the corresponding Recall@5 gains range from $5.82\%$ to $8.46\%$. Instruments and Games also improve both metrics, extending the result across all six domains. These paired gains support learning a distribution whose probability allocation serves the final ranked list. The cross-domain summaries average the six relative improvements, giving each domain equal weight.

\paragraph{\textbf{Inference Efficiency.}}
\label{sec:inference-efficiency}

OneChord completes full-test Top-10 generation in \textbf{0.23--1.24 seconds}, achieving \textbf{1.04--1.18$\times$} speedup over the fastest baseline across all six domains (Table~\ref{tab:inference-cost}). It combines the best accuracy in Table~\ref{tab:main-results} with the lowest generation cost in Table~\ref{tab:inference-cost}. Its serving path uses one history encoding, one parallel decoding, and catalog scoring.

The accuracy and cost tables evaluate complementary aspects of the same recommendation pipeline. The first measures the positions of held-out targets in complete-catalog rankings; the second measures generation of the final Top-10 list. OneChord's learned fusion participates in both. Its refinement objectives enrich training, while the deployed predictor retains the direct shared-backbone computation described in Section~\ref{sec:method-training}. Figure~\ref{fig:task} presents the mean accuracy--cost relationship across domains.

% Generated by paper/tables/make_ablation_table.py from saved test results.
\begin{table*}[t]
\caption{Full-catalog test ablations over six datasets. Blue: full OneChord. The controlled protocol is specified in Appendix~\ref{app:ablation-protocol}.}
\label{tab:ablation}
\centering
\begingroup
\small
\renewcommand{\arraystretch}{1.12}
\OCMainTableFit{13}{l*{12}{r}}{%
\toprule
 & \multicolumn{2}{c}{\textbf{Sports}} & \multicolumn{2}{c}{\textbf{Beauty}} & \multicolumn{2}{c}{\textbf{Toys}} & \multicolumn{2}{c}{\textbf{Scientific}} & \multicolumn{2}{c}{\textbf{Instruments}} & \multicolumn{2}{c}{\textbf{Games}} \\
\cmidrule(lr){2-3}\cmidrule(lr){4-5}\cmidrule(lr){6-7}\cmidrule(lr){8-9}\cmidrule(lr){10-11}\cmidrule(lr){12-13}
\rowcolor{OCTableHeader}
\textbf{Variant} & R@5 & N@5 & R@5 & N@5 & R@5 & N@5 & R@5 & N@5 & R@5 & N@5 & R@5 & N@5 \\
\midrule
\rowcolor{OCTableOurs}
\textbf{OneChord} & 0.0438 & 0.0295 & 0.0731 & 0.0497 & 0.0746 & 0.0546 & 0.0475 & 0.0325 & 0.0480 & 0.0319 & 0.0671 & 0.0435 \\
\midrule
Fixed equal fusion ($\alpha=1/2$) & 0.0440 & 0.0293 & 0.0699 & 0.0490 & 0.0747 & 0.0541 & 0.0463 & 0.0321 & 0.0466 & 0.0308 & 0.0657 & 0.0434 \\
w/o fused likelihood & 0.0405 & 0.0269 & 0.0663 & 0.0448 & 0.0717 & 0.0519 & 0.0425 & 0.0294 & 0.0413 & 0.0278 & 0.0585 & 0.0391 \\
Single SID component & 0.0429 & 0.0283 & 0.0695 & 0.0485 & 0.0741 & 0.0537 & 0.0477 & 0.0328 & 0.0487 & 0.0322 & 0.0665 & 0.0436 \\
\midrule
w/o all distillation & 0.0419 & 0.0278 & 0.0702 & 0.0476 & 0.0716 & 0.0518 & 0.0452 & 0.0305 & 0.0455 & 0.0297 & 0.0642 & 0.0413 \\
w/o SID distillation & 0.0427 & 0.0285 & 0.0717 & 0.0485 & 0.0730 & 0.0532 & 0.0463 & 0.0316 & 0.0466 & 0.0308 & 0.0656 & 0.0425 \\
w/o boundary distillation & 0.0429 & 0.0287 & 0.0720 & 0.0488 & 0.0732 & 0.0536 & 0.0467 & 0.0319 & 0.0469 & 0.0311 & 0.0659 & 0.0428 \\
w/o composition consistency & 0.0431 & 0.0288 & 0.0718 & 0.0486 & 0.0735 & 0.0537 & 0.0465 & 0.0317 & 0.0470 & 0.0312 & 0.0660 & 0.0427 \\
w/o ranking distillation & 0.0426 & 0.0284 & 0.0715 & 0.0483 & 0.0728 & 0.0531 & 0.0461 & 0.0315 & 0.0465 & 0.0307 & 0.0654 & 0.0423 \\
Item-only ranking distillation & 0.0430 & 0.0289 & 0.0721 & 0.0489 & 0.0736 & 0.0538 & 0.0468 & 0.0319 & 0.0468 & 0.0309 & 0.0661 & 0.0429 \\
\midrule
w/o catalog normalization & 0.0424 & 0.0287 & 0.0700 & 0.0485 & 0.0742 & 0.0545 & 0.0473 & 0.0326 & 0.0476 & 0.0317 & 0.0662 & 0.0437 \\
w/o hard candidates & 0.0430 & 0.0283 & 0.0705 & 0.0486 & 0.0745 & 0.0540 & 0.0475 & 0.0327 & 0.0478 & 0.0321 & 0.0658 & 0.0431 \\
w/o curriculum & 0.0421 & 0.0279 & 0.0728 & 0.0502 & 0.0745 & 0.0541 & 0.0455 & 0.0317 & 0.0481 & 0.0321 & 0.0669 & 0.0439 \\
\bottomrule
}
\endgroup
\end{table*}

\begin{figure}[!htbp]
\centering
\includegraphics[width=0.8\columnwidth]{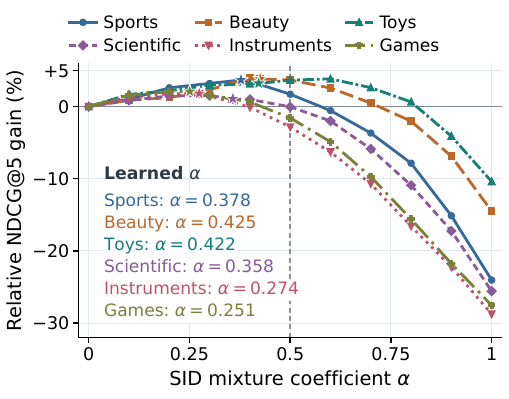}
\captionsetup{skip=5pt}
\caption{Validation NDCG@5 gains over item-only scoring. Stars: learned weights; dashed line: equal fusion.}
\Description{Six curves share one coordinate system, showing the percentage change in validation NDCG at five relative to item-only scoring as the SID coefficient varies from zero to one. Dataset colors, line styles, and markers distinguish the curves. Stars mark trained coefficients, and a vertical dashed line marks equal fusion. A color-matched list in the lower left records each dataset's learned coefficient.}
\label{fig:fusion-analysis}
\end{figure}

\subsection{Ablation Study}
\label{sec:ablation}

Table~\ref{tab:ablation} separates the roles of fused supervision, refinement transfer, and target construction. The interventions vary the fusion gate, SID representation, objectives, candidate support, and schedule; Appendix~\ref{app:ablation-protocol} defines each configuration.

\paragraph{\textbf{Supervising the served distribution.}}
OneChord improves NDCG@5 over the variant without fused likelihood on \textbf{all six datasets}, by \textbf{5.20--14.75\%}. Direct supervision of $q_\Theta$ therefore complements the branch objectives. Learning the gate also improves NDCG@5 over fixed equal fusion on all six domains, reaching \textbf{3.57\%} on Instruments. The single-component control varies the SID output representation while retaining the shared decoder; both component settings preserve one-pass prediction.

The fused-likelihood intervention preserves branch supervision and refinement objectives, isolating direct label supervision of the served distribution within the configured training objective. The fixed-gate intervention keeps this fused supervision and fixes the branch balance. Their results support the two linked operations in Section~\ref{sec:method-normalization}: supervise the combined probability and calibrate its mixing weight. The responsibility-weighted gradient in Eq.~\eqref{eq:branch-responsibility-gradient} explains how that supervision reaches the shared representations.

\paragraph{\textbf{Transferring refined rankings.}}
With refinement distillation, OneChord's NDCG@5 is \textbf{4.41--7.41\%} higher than the variant without any distillation across the six datasets. Removing SID distillation gives gaps of \textbf{2.47\%} on Beauty and \textbf{2.63\%} on Toys; removing ranking distillation gives \textbf{3.87\%} on Sports and \textbf{3.17\%} on Scientific. The choice of ranking target also matters: distilling the fused distribution improves NDCG@5 over item-only ranking distillation on \textbf{every domain}, by \textbf{1.40--3.24\%}. That control retains fused supervision and inference, linking the observed gains to the distribution used for transfer.

\paragraph{\textbf{Target support and candidates.}}
Catalog-normalized SID targets improve NDCG@5 over unrestricted targets by $2.79\%$ on Sports and $2.47\%$ on Beauty. Shared hard candidates give corresponding gains of $4.24\%$ and $2.26\%$ over sampled candidates alone. These results support aligning refinement on legal candidates that compete for the final list. The scheduling control complements the objective and support interventions; all configurations retain one-pass inference.

\subsection{Adaptive Fusion Analysis}
\label{sec:fusion-analysis}

Figure~\ref{fig:fusion-analysis} varies $\alpha$ with both trained branches fixed and reports validation NDCG@5 relative to item-only scoring. The learned fusion outperforms both single-branch endpoints on \textbf{all six datasets}, gaining \textbf{1.08--3.83\%} over item-only scoring. Learned SID weights span $0.251$--$0.425$, with NDCG@5 within \textbf{0.61\%} of the highest value in each coefficient sweep. The branches thus supply complementary ranking evidence, and supervised gate learning selects an effective domain-level balance. Appendix~\ref{app:fusion-analysis} specifies the sweep.

The learned coefficients give both branches positive weight. Their variation across domains shows how one learned scalar accommodates different evidence balances. Because the sweep holds the trained predictors fixed, it isolates score combination from representation changes; the fixed-fusion ablation separately examines their joint training. Together, these views connect the learned operating point to both the scoring function and the training objective.

\subsection{Performance across Interaction Sparsity}
\label{sec:sparsity-analysis}

We compare OneChord with the previous state-of-the-art method, SPRINT~\cite{cai2026sprint}, across target-item popularity and effective user-history length. Table~\ref{tab:sparsity} reports full-catalog test performance averaged equally over the six datasets.

Groups are defined from the training data and the prediction context. Let $f(i)$ count an item's training interactions, and let $q_{50},q_{80}$ be the catalog's training-frequency percentiles. Tail, middle, and head targets satisfy $f(i)\leq q_{50}$, $q_{50}<f(i)\leq q_{80}$, and $f(i)>q_{80}$, respectively; equal-frequency items stay together. Effective histories contain the non-padding items visible to the predictor: short, medium, and long groups have at most $5$, $6$--$10$, and $11$--$20$ interactions. Popularity uses original item identities, and recommendation accuracy uses the same exact-SID criterion as the main evaluation. Appendix~\ref{app:sparsity-analysis} gives the domain thresholds and aggregation procedure.

% Generated by paper/tables/make_sparsity_table.py from frozen test results.
\begin{table}[t]
\caption{Full-catalog test performance by interaction sparsity, averaged equally over six datasets. Gains are relative to SPRINT.}
\label{tab:sparsity}
\centering
\begingroup
\small
\renewcommand{\arraystretch}{1.12}
\OCMainTableFit{7}{l*{6}{r}}{%
\toprule
 & \multicolumn{2}{c}{SPRINT} & \multicolumn{2}{c}{\textbf{OneChord}} & \multicolumn{2}{c}{Gain} \\
\cmidrule(lr){2-3}\cmidrule(lr){4-5}\cmidrule(lr){6-7}
Group & R@5 & N@5 & R@5 & N@5 & R@5 & N@5 \\
\midrule
\rowcolor{OCTableHeader}
\multicolumn{7}{l}{\textit{Target-item popularity}} \\
Tail & 0.0263 & 0.0191 & \textbf{0.0276} & \textbf{0.0198} & +4.88\% & +3.62\% \\
Middle & 0.0355 & 0.0249 & \textbf{0.0408} & \textbf{0.0293} & +14.92\% & +17.70\% \\
Head & 0.0800 & 0.0527 & \textbf{0.0851} & \textbf{0.0569} & +6.30\% & +8.01\% \\
\midrule
\rowcolor{OCTableHeader}
\multicolumn{7}{l}{\textit{Effective history length}} \\
Short & 0.0581 & 0.0394 & \textbf{0.0616} & \textbf{0.0419} & +6.02\% & +6.60\% \\
Medium & 0.0524 & 0.0356 & \textbf{0.0570} & \textbf{0.0389} & +8.71\% & +9.06\% \\
Long & 0.0505 & 0.0334 & \textbf{0.0545} & \textbf{0.0377} & +7.86\% & +12.82\% \\
\bottomrule
}
\endgroup
\end{table}

OneChord improves both metrics in every averaged group. The largest item-group gains occur for middle-frequency targets: \textbf{14.92\%} in Recall@5 and \textbf{17.70\%} in NDCG@5. Tail items also benefit, with gains of $4.88\%$ and $3.62\%$, respectively, extending the improvements to less frequently observed targets. Across user groups, NDCG@5 gains rise from $6.60\%$ for short histories to $9.06\%$ for medium histories and \textbf{12.82\%} for long histories. The strongest average ranking improvement thus occurs when more historical interactions are available.

Each partition covers the complete test population and retains complete-catalog competition. Gains across both partitions show that the aggregate improvement extends across target frequencies and user-context lengths. The larger relative NDCG@5 gains for middle-frequency targets and long histories identify cohorts where the learned fused ranking is especially effective.

\section{Conclusion}
\label{sec:conclusion}

OneChord unifies next-item supervision, refinement distillation, and inference around \textbf{the served ranking distribution}. Learned fusion combines joint SID evidence and item matching; teacher-assisted states let the student construct refined fused targets. Catalog normalization and shared hard candidates preserve the scoring rule during transfer. Our analysis characterizes candidate-conditioned fusion and establishes a margin-based condition for full-catalog Top-$K$ agreement.

Across six domains, OneChord improves Recall@5 by 6.30\% and NDCG@5 by 7.12\% on average over the strongest baseline, with 1.04--1.18$\times$ faster generation than the fastest baseline. Gains extend to every averaged popularity and history group. One history encoding and one parallel decoding carry \textbf{distill what you rank} to accurate, efficient recommendation on the Web.

\bibliographystyle{ACM-Reference-Format}
\bibliography{references}

\appendix
\section{Derivations and Method Details}
\label{app:method}

\subsection{Teacher Kernel and Legal Completions}
\label{app:teacher-kernel}

The position kernel is specified in Eq.~\eqref{eq:explicit-kernel}. Let $\mathcal S(x)=\{s\in\mathcal S:s_\ell=x_\ell\text{ for every visible }\ell\}$. A state $x_t\sim Q_t(\cdot\mid s^+)$ has $s^+\in\mathcal S(x_t)$. Visible-code copying restricts the teacher to this nonempty set. Each sampled completion $\widehat s$ agrees with the source and supplies every newly revealed code, giving $\widehat s\in\mathcal S(x_u)$. Induction proves legal support for repeated transitions. Finite logits give positive mass to compatible SIDs and positive normalizers.

\subsection{Stopped-Target Composition}
\label{app:composition}

\paragraph{\textbf{Refinement average.}}
Fix $z_t$, $u<t$, and stopped predictions $a_x(s)=\bar p_\theta(s\mid H,x,u)$. On the finite refinement space,
\begin{equation}
m(s)=\sum_{x\in\mathcal X}K_\psi(x\mid z_t;u)a_x(s).
\end{equation}
Nonnegativity and $\sum_sm(s)=1$ follow from the normalized kernel and predictions. For $b(s)=p_\theta(s\mid z_t)$, interchanging finite sums gives
\begin{equation}
\begin{aligned}
\mathbb E_x\CE(a_x,b)
&=-\sum_xK_\psi(x\mid z_t;u)\sum_sa_x(s)\log b(s)\\
&=-\sum_sm(s)\log b(s)=\CE(m,b).
\end{aligned}
\label{eq:composition-barycenter}
\end{equation}
This proves Eq.~\eqref{eq:sampled-composition}. Writing $\log(a_x/b)=\log(a_x/m)+\log(m/b)$ and averaging proves Eq.~\eqref{eq:refinement-kl-decomposition}. Since $\CE(m,b)=\mathcal H(m)+\KL(m\|b)$, the optimum over unrestricted online distributions is $b=m$. With stopped targets and transitions, the refinement-variation term is constant for that update. The stopped-target gradients satisfy
\begin{equation}
\nabla_\theta\CE(\sg[a],b_\theta)
=\nabla_\theta\KL(\sg[a]\|b_\theta).
\end{equation}

\paragraph{\textbf{Finite noise measure.}}
Let $\delta_a$ be a unit point mass at $a$ and $\rho$ the probability of forcing a fully masked source. Composition samples from
\begin{equation}
\nu_D(\mathrm dt)=\rho\,\delta_1(\mathrm dt)
+\frac{1-\rho}{D-1}\sum_{j=2}^{D}\delta_{j/D}(\mathrm dt).
\label{eq:discrete-noise-measure}
\end{equation}
Thus $\int f(t)\nu_D(\mathrm dt)=\rho f(1)+(1-\rho)(D-1)^{-1}\sum_{j=2}^Df(j/D)$. With $D=4$ and $\rho=1/2$, the masses at $1/2$, $3/4$, and $1$ are $1/6$, $1/6$, and $2/3$. Substituting this measure into Eq.~\eqref{eq:composition} gives the expected sampled loss, including $1/L$. Ranking consistency uses the full-source case $t=1$ and forms candidates for each sampled refinement.

\subsection{Ranking Fidelity and Distribution Mass}
\label{app:ranking-proof}

Proposition~\ref{prop:ranking-fidelity} and its proof concern candidate agreement with Top-$K$ coverage. To characterize distribution mass, let $\widehat P$ and $\widehat Q$ be the temperature-scaled full-catalog refined and direct predictions. Write $a=\widehat P(\mathcal A)$ and $b=\widehat Q(\mathcal A)$. For $0<a,b<1$, the KL chain rule gives
\begin{equation}
\begin{aligned}
\KL(\widehat P\|\widehat Q)
&=\operatorname{kl}(a\|b)+a\KL(P\|Q)\\
&\quad+(1-a)\KL(\widehat P_{\mathcal A^c}\|
\widehat Q_{\mathcal A^c}),
\end{aligned}
\end{equation}
where $\operatorname{kl}$ is Bernoulli KL. The candidate objective controls the within-set term; Top-$K$ coverage and the margin give the selection guarantee. The remaining terms characterize set-mass and outside-candidate agreement.

\subsection{Ranking-Consistency Gate Gradient}
\label{app:gate-proof}

For fixed branch predictions, $\partial\log q_1(s)/\partial\beta=\rho_s-\alpha$. Set $\ell_s=\log q_1(s)/T$ on the fixed candidate set; then $Q=\operatorname{softmax}(\ell)$. With $P$ stopped, differentiating $T^2\KL(P\|Q)$ gives
\begin{equation}
\begin{aligned}
\frac{\partial\,[T^2\KL(P\|Q)]}{\partial\beta}
&=T\sum_{s\in\mathcal A}(Q(s)-P(s))(\rho_s-\alpha)\\
&=T\sum_{s\in\mathcal A}(Q(s)-P(s))\rho_s.
\end{aligned}
\end{equation}
The $\alpha$ term cancels because $P$ and $Q$ each sum to one. This proves Eq.~\eqref{eq:ranking-gate-gradient}, including the effect of post-fusion restriction and temperature.

\subsection{Architecture and Training Specification}
\label{app:implementation}

\paragraph{\textbf{Index and backbone.}}
The fixed four-code OPQ index~\cite{ge2014opq} has $256$ codes per position. Identical SIDs share one catalog entry. The teacher and student share this mapping and catalog. The backbone has hidden width $256$, four attention heads, one history-encoder layer, four decoder layers, feed-forward width $1024$, and dropout $0.1$. Histories retain $20$ items. The decoder uses position-specific mask embeddings and a sinusoidal time embedding passed through an MLP. Padding is masked in history attention and decoder cross-attention.

\paragraph{\textbf{Prediction heads.}}
The student uses $R=4$ components with the parameterization in Eq.~\eqref{eq:component-parameterization}; $B_\theta$ is a shared two-layer GELU MLP. The teacher has one product-categorical component and the same backbone dimensions, with token logits directly from decoder states. The shared item projection $M_\theta$ uses a two-layer ReLU MLP. We use $\tau_{\mathrm{item}}=0.1$ on Instruments and $0.01$ elsewhere, and initialize $\beta=0$.

\paragraph{\textbf{Noise and targets.}}
With $D=4$, teacher training samples $t$ uniformly from $\{1/D,\ldots,1\}$ and independently replaces it with $1$ with probability $0.25$. Student composition uses Eq.~\eqref{eq:discrete-noise-measure}. Ranking KL is averaged over full-source minibatch examples and is zero when none are present; its online prediction reuses the direct supervised pass. Stopped targets use the current student with detached history memory and dropout disabled.

\paragraph{\textbf{Candidates and coefficients.}}
The supervised multiset contains the positive SID and $100$ uniform negatives sampled with replacement, excluding the positive. Ranking distillation adds $b=32$ candidates per predictor and uses $T=2$. We set $\lambda_s=\lambda_i=\lambda_f=\lambda_b=\lambda_c=1$ and $\lambda_r=0.1$. With $[a]_0^1=\min(1,\max(0,a))$,
\begin{equation}
\begin{aligned}
w(n)&=\left[\frac{n-n_0}{n_{\mathrm{ramp}}}\right]_0^1,\\
d(n)&=1-(1-\eta)
\left[\frac{n-n_{\mathrm{start}}}{n_{\mathrm{end}}-n_{\mathrm{start}}}\right]_0^1.
\end{aligned}
\end{equation}
Step counts correspond to $5$ supervised epochs, $10$ ramp epochs, and composition decay from epoch $35$ to $75$, with $\eta=0.25$. Fused supervision is active throughout; boundary and ranking consistency retain their post-ramp coefficients during composition decay. Chunked catalog sums preserve exact normalization.

\section{Experimental Protocol}
\label{app:experimental-protocol}

\subsection{Datasets and Evaluation}
\label{app:datasets}

\begin{table}[t]
\caption{Dataset statistics after preprocessing, excluding padding. SID counts refer to distinct four-code identifiers; Avg.\ is the mean sequence length.}
\label{tab:dataset-statistics}
\centering
\small
\setlength{\tabcolsep}{3.2pt}
\begin{tabular}{@{}lrrrrr@{}}
\toprule
Dataset & Users & Items & SIDs & Interactions & Avg.\\
\midrule
Sports & 35,598 & 18,357 & 15,102 & 296,337 & 8.32\\
Beauty & 22,363 & 12,101 & 10,377 & 198,502 & 8.88\\
Toys & 19,412 & 11,924 & 10,220 & 167,597 & 8.63\\
Scientific & 50,985 & 25,848 & 19,810 & 412,947 & 8.10\\
Instruments & 57,439 & 24,587 & 20,951 & 511,836 & 8.91\\
Games & 94,762 & 25,612 & 20,944 & 814,586 & 8.60\\
\bottomrule
\end{tabular}
\end{table}

For Amazon Reviews 2014, we use the public category-specific 5-core interactions. For Amazon Reviews 2023, we use the public \texttt{5core/last\_out\_w\_his} release and recover each user's longest chronological sequence from its supplied histories and targets. The last two interactions are held out for validation and testing. Training examples slide over the preceding sequence, using up to $20$ historical items for each next-item target. Validation and test histories contain the interactions preceding their respective targets. Validation and testing each evaluate one held-out target per user.

The fixed OPQ mapping covers every processed item. Evaluation ranks all distinct legal SIDs and scores exact four-code matches to held-out targets. Identical SIDs share one catalog entry. The mapping, catalog, and target labels are shared for the SID-based comparison.

\subsection{Baseline Protocol}
\label{app:baseline-protocol}

Table~\ref{tab:main-results} uses common chronological splits and full-catalog evaluation. The fastest baseline and OneChord share the OPQ mapping, catalog, and evaluator in the efficiency comparison.

\subsection{Optimization and Reproducibility}
\label{app:optimization-timing}

Appendix~\ref{app:implementation} specifies the backbone, prediction heads, noise grid, candidate construction, loss coefficients, and schedules.

\paragraph{\textbf{Optimization.}}
The student and the fastest baseline use AdamW with learning rate $10^{-4}$, zero weight decay, gradient clipping at $1$, and batch size $1024$. A cosine schedule uses $10{,}000$ warmup steps. Training lasts up to $150$ epochs, with full-validation evaluation each epoch and patience $10$. Denoisers train for up to $100$ epochs with the same backbone dimensions and $10\%$ warmup. Students have independent initialization. Distillation variants within a dataset share the frozen denoiser, mapping, and splits.

\paragraph{\textbf{Random seeds.}}
Main accuracy results use five random seeds: $2024$, $2025$, $2026$, $2027$, and $2028$. Each run selects its checkpoint using the complete validation set. The selection score is $0.8\,\mathrm{N@10}+0.2\,\mathrm{R@10}$. The selected checkpoint is then evaluated on the complete test set. Validation and test batch sizes are both $256$.

\paragraph{\textbf{Accuracy reporting.}}
Table~\ref{tab:main-results} reports mean test performance; each OneChord cell shows the mean with its standard deviation across the five runs directly below. Each relative improvement is computed as $(m-b)/b\times100\%$, where $m$ is OneChord's mean and $b$ is the strongest baseline mean in that column. The reported percentages use the displayed means. Cross-domain improvements are arithmetic means of the six per-domain relative gains, yielding $6.30\%$ for Recall@5 and $7.12\%$ for NDCG@5.

\subsection{Ablation Protocol and Definitions}
\label{app:ablation-protocol}

The twelve ablated configurations use seed $2024$ over all six datasets, sharing the OPQ mapping, chronological splits, catalog, and backbone dimensions. Distillation controls share a frozen denoiser within each dataset. Checkpoint selection uses the complete-validation criterion above, followed by complete-test evaluation. Section~\ref{sec:ablation} computes relative differences from the displayed Recall@5 and NDCG@5 scores.

\paragraph{\textbf{Fusion and SID representation.}}
\emph{Fixed equal fusion} holds $\alpha=1/2$ throughout training and inference. \emph{Without fused likelihood} sets $\lambda_f=0$, retaining the branch objectives and distillation. \emph{Single SID component} sets $R=1$ and retains the first initialized component; the backbone and initial data order match those of the full model.

\paragraph{\textbf{Distillation objectives.}}
\emph{Without all distillation} sets $\lambda_b=\lambda_c=\lambda_r=0$. \emph{Without SID distillation} sets $\lambda_b=\lambda_c=0$. The boundary, composition, and ranking controls individually set $\lambda_b$, $\lambda_c$, and $\lambda_r$ to zero. \emph{Item-only ranking distillation} constructs targets and hard candidates from $r_\theta$, retaining $q_\Theta$ for supervised fused likelihood and inference. This intervention isolates the distillation distribution within the fused recommender.

\paragraph{\textbf{Target support and scheduling.}}
\emph{Without catalog normalization} applies boundary and composition distillation to the joint SID mixture without restricting its normalization to legal catalog SIDs. Its fused ranking objective remains unchanged. \emph{Without hard candidates} removes the additional top-scoring candidates from both predictions, retaining the labeled item and sampled supervised negatives. \emph{Without curriculum} sets the distillation ramp and composition-decay factors to one throughout training, retaining their configured loss weights. Every control retains one parallel decoding pass at inference.

\begin{figure*}[t]
\centering
\includegraphics[width=\textwidth,height=448pt,keepaspectratio]{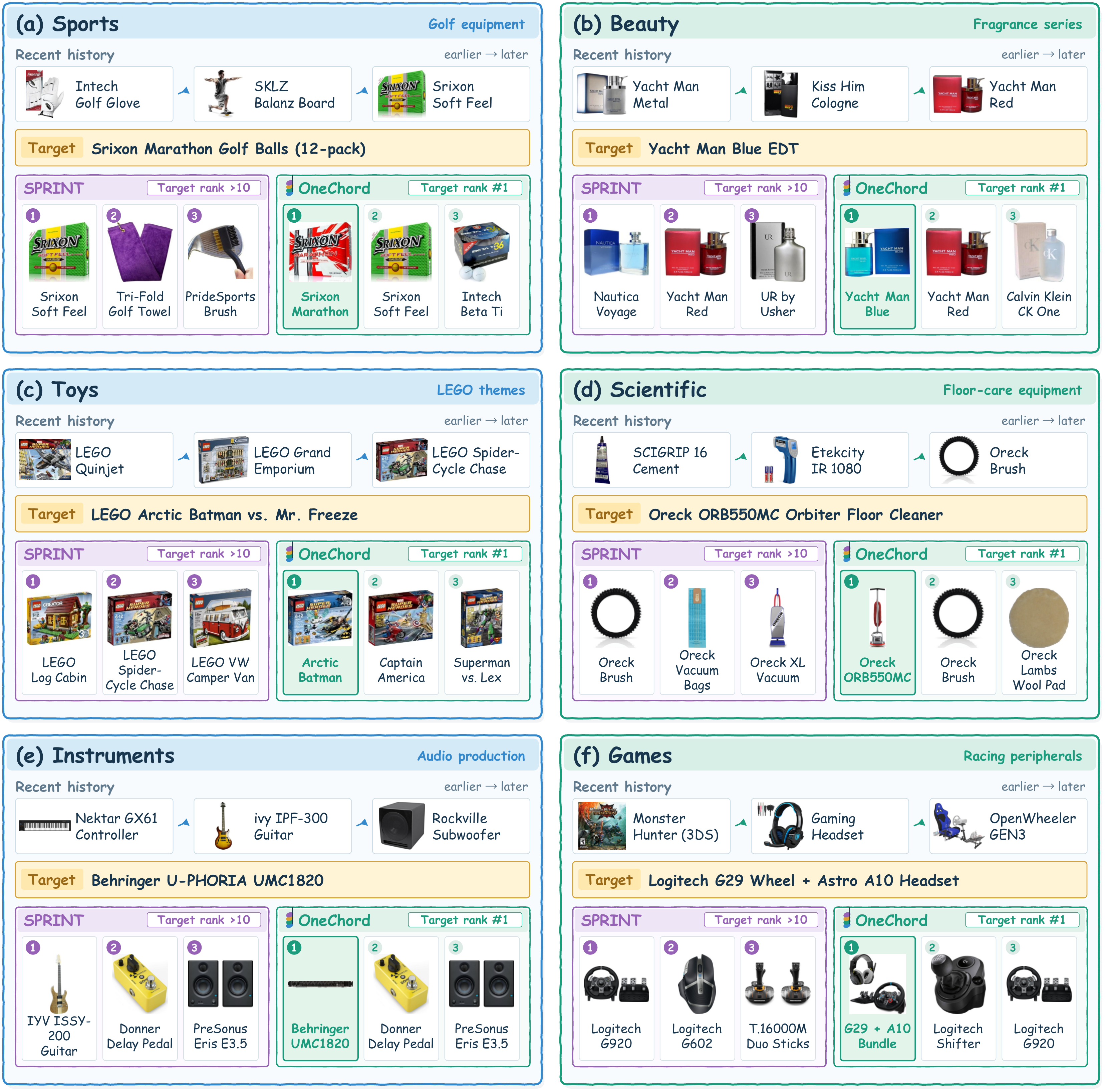}
\caption{Product images illustrate recent histories and original Top-3 outputs in six selected successful test cases. Green marks the target; rank denotes its Top-10 position. Shown targets and predictions have unique SIDs.}
\Description{Six recommendation cases comparing SPRINT and OneChord on Sports, Beauty, Toys, Scientific, Instruments, and Games. Each panel contains recent history, a held-out target, and two ranked recommendation lists. OneChord ranks the target first in all six selected cases; SPRINT does not retrieve it in its first ten results.}
\label{fig:case-examples}
\end{figure*}

\subsection{Fusion Sensitivity Protocol}
\label{app:fusion-analysis}

Figure~\ref{fig:fusion-analysis} uses the selected seed-$2024$ OneChord checkpoint for each dataset and evaluates all validation users over the same legal SID catalog. We hold all trained parameters fixed and vary the inference coefficient over $\{0,0.1,\ldots,1\}$, additionally evaluating the coefficient learned during training. The endpoints $\alpha=0$ and $\alpha=1$ use item evidence and SID evidence, respectively. They isolate each branch's scoring contribution within the same trained model. The plotted ordinate is $100[\mathrm{N@5}(\alpha)/\mathrm{N@5}(0)-1]$, placing each dataset's item-only score at zero. The fixed equal-fusion control in Table~\ref{tab:ablation} instead trains with $\alpha=1/2$ throughout. Stars identify the coefficients learned by the original training objectives; the sweep serves as a sensitivity analysis. All relative NDCG@5 differences use unrounded scores.

\subsection{Interaction Sparsity Protocol}
\label{app:sparsity-analysis}

Table~\ref{tab:sparsity} uses both methods' validation-selected seed-$2024$ checkpoints and the groups in Section~\ref{sec:sparsity-analysis}. Frequencies count chronological training interactions once, excluding held-out events and sliding-window repetitions. Quantiles round upward to observed frequencies: $(q_{50},q_{80})=(7,15)$ for Sports and Beauty, $(6,14)$ for Toys, $(7,14)$ for Scientific, $(8,17)$ for Instruments, and $(9,25)$ for Games.

The popularity and history partitions each cover all test users. Both models rank the full legal SID catalog before group scoring. Each group's Recall@5 and NDCG@5 are averaged equally over six datasets; relative gains use ratios of the unrounded macro means.

\subsection{Qualitative Recommendation Examples}
\label{app:case-examples}

Figure~\ref{fig:case-examples} illustrates one successful case per dataset using the checkpoints from Appendix~\ref{app:sparsity-analysis}. Both models preserve the original histories and rankings. OneChord retrieves each shown target at rank~1; SPRINT's Top-10 omits it. Product images follow the dataset metadata.

\end{document}